\documentclass[12pt]{article}%
\usepackage{fullpage}
\usepackage{amsmath, amsfonts, amssymb, amsthm}
\usepackage{mathtools}
\usepackage{mathrsfs}
\usepackage{latexsym}
\usepackage{mathtools}
\usepackage{apptools}
\usepackage{array}
\usepackage{orcidlink}

\usepackage{color}
\usepackage{graphicx}  
\usepackage{hyperref}
\usepackage{subfig}

\usepackage{wasysym}
\usepackage{float}
\usepackage{enumitem}
\usepackage{scalerel}
\usepackage[linesnumbered,ruled,vlined]{algorithm2e}
\usepackage[utf8]{inputenc}
\usepackage{listings}


\definecolor{darkpink}{rgb}{0.91, 0.33, 0.5}
\definecolor{caribbeangreen}{rgb}{0.0, 0.8, 0.6}
\definecolor{cobalt}{rgb}{0.0, 0.28, 0.67}
\definecolor{amethyst}{rgb}{0.6, 0.4, 0.8}



\newtheorem{theorem}{Theorem}
\newtheorem{claim}{Claim}
\newtheorem{lemma}{Lemma}
\newtheorem{corollary}{Corollary}
\newtheorem{definition}{Definition}

\newtheorem{proposition}{Proposition}


\newcommand{\C}{\mathcal{C}}

\DeclareFontShape{OT1}{cmr}{m}{scit}{<-> ssub * cmr/m/sc}{}

\title{Cluster Editing on Cographs and Related Classes}

\author{{Manuel Lafond\orcidlink{0000-0002-5305-7372}\footnote{Department of Computer Science, Université de Sherbrooke, \textit{Email: manuel.lafond@usherbrooke.ca}}   } \and {Alitzel L\'opez S\'anchez\orcidlink{0000-0002-3545-039X}\footnote{Department of Computer Science, Université de Sherbrooke, \textit{Email: alitzel.lopez.sanchez@usherbrooke.ca}}} \and  {Weidong Luo\orcidlink{0009-0003-5300-606X}\footnote{Department of Computer Science, Université de Sherbrooke, \textit{Email: weidong.luo@yahoo.com}}}}

\begin{document}

\maketitle
\begin{abstract}
In the \textsc{Cluster Editing} problem, sometimes known as (unweighted) \textsc{Correlation Clustering}, we must insert and delete a minimum number of edges to achieve a graph in which every connected component is a clique.  Owing to its applications in computational biology, social network analysis, machine learning, and others, this problem has been widely studied for decades and is still undergoing active research. There exist several parameterized algorithms for general graphs, but little is known about the complexity of the problem on specific classes of graphs.
    
Among the few important results in this direction, if only deletions are allowed, the problem can be solved in polynomial time on cographs, which are the $P_4$-free graphs. However, the complexity of the broader editing problem on cographs is still open. We show that even on a very restricted subclass of cographs, the problem is NP-hard, W[1]-hard when parameterized by the number $p$ of desired clusters, and that time $n^{o(p/\log p)}$ is forbidden under the ETH.  This shows that the editing variant is substantially harder than the deletion-only case, and that hardness holds for the many superclasses of cographs (including graphs of clique-width at most $2$, perfect graphs, circle graphs, permutation graphs).  On the other hand, we provide an almost tight upper bound of time $n^{O(p)}$, which is a consequence of a more general $n^{O(cw \cdot p)}$ time algorithm, where $cw$ is the clique-width.  Given that forbidding $P_4$s maintains NP-hardness, we look at $\{P_4, C_4\}$-free graphs, also known as trivially perfect graphs, and provide a cubic-time algorithm for this class. 
\end{abstract}

\section{Introduction}

Clustering objects into groups of similarity is a ubiquitous task in computer science, with applications in computational biology~\cite{DBLP:journals/dam/ShamirST04,lafond2018accurate}, 
social network analysis~\cite{abu2021greedy, arrighi2023cluster, veldt2018correlation}, machine learning~\cite{dhanachandra2015image,ezugwu2022comprehensive}, and many others~\cite{becker2005survey}.  There are many interpretations of what a ``good'' clustering is, 
with one of the most simple, elegant, and useful being the \textsc{Cluster Editing} formulation --- sometimes also known as (unweighted) \textsc{Correlation Clustering}~\cite{bansal2004correlation}.  In this graph-theoretical view, pairs of objects that are believed to be similar
are linked by an edge, and non-edges correspond to dissimilar objects.  
If groups are perfectly separable, this graph should be a \emph{cluster graph}, that is, a graph in which each connected component is a clique. 
However, due to noise and errors, this is almost never observed in practice.  To remove such errors, 
\textsc{Cluster Editing} asks for a minimum number of edges to correct 
to obtain a cluster graph, where ``correcting'' means adding a non-existing edge or deleting an edge.


Owing to its importance, this APX-hard \cite{DBLP:conf/focs/CharikarGW03}, of course also NP-hard~\cite{bansal2004correlation,  DBLP:journals/acta/KrivanekM86}, problem has been widely studied in the parameterized complexity community. Let $k$ be the number of required edge modifications. After a series of works, the problem now can be solved in time $O^*(1.62^k)$  \cite{DBLP:journals/jda/Bocker12, DBLP:journals/tcs/BockerBBT09, DBLP:journals/ipl/BockerD11, DBLP:conf/ciac/GrammGHN03, DBLP:journals/algorithmica/GrammGHN04} and admits a $2k$ kernel~\cite{DBLP:journals/algorithmica/CaoC12, DBLP:journals/jcss/ChenM12,  DBLP:journals/tcs/Guo09}. In addition, if we require that the solution contains exactly $p$ clusters, then the problem is NP-hard for every $p \geq 2$ \cite{DBLP:journals/dam/ShamirST04}, but admits a PTAS \cite{DBLP:journals/toc/GiotisG06}, a $(p+2)k + p$ kernel \cite{DBLP:journals/tcs/Guo09}, and can be solved in $2^{O(\sqrt{pk})}n^{O(1)}$ time. 
This is shown to be tight assuming the ETH, under which $2^{o(\sqrt{pk})}n^{O(1)}$ is not possible~\cite{DBLP:journals/jcss/FominKPPV14}.

Another angle, which we study in this paper, is to focus on specific classes of graphs. 
For example, restricting the input to bounded-degree graphs does not help, as \textsc{Cluster Editing} is NP-hard even on planar unit disk graphs with
maximum degree $4$~\cite{komusiewicz2012cluster, ochs2023cluster}.  In~\cite{berger2016ptas}, the authors circumvent the APX-hardness of the problem by proposing a PTAS on planar graphs. A polynomial-time algorithm is provided for the problem on unit interval graphs~\cite{mannaa2010cluster}, a subclass of unit disk. 
The \textsc{Cluster Deletion} problem, in which only edge deletions are allowed, has received much more attention on restricted classes. It is polynomial-time solvable on graphs of maximum degree $3$ and NP-hard for larger degrees~\cite{komusiewicz2012cluster}.  Various results were also obtained on interval and split graphs~\cite{konstantinidis2021cluster}, other subclasses of chordal graphs~\cite{bonomo2015complexity}, and unit disk graphs~\cite{ochs2023cluster}.   In~\cite{italiano2023structural}, graphs with bounded structural parameters are studied, with the weighted variant being paraNP-hard in twin cover, but FPT in the unweighted case.

If we forbid specific induced subgraphs, the reduction in~\cite{komusiewicz2012cluster} implies that \textsc{Cluster Deletion} is NP-hard on $C_4$-free graphs (as observed in~\cite{DBLP:journals/dm/GaoHN13}).  If instead we forbid $P_4$, i.e., induced paths on four vertices, we obtain the class of \emph{cographs}, on which the deletion problem is remarkably shown to be polynomial-time solvable in~\cite{DBLP:journals/dm/GaoHN13} using a greedy max-clique strategy. However, the complexity of \textsc{Cluster Editing} on cographs has remained open. 
In addition, to our knowledge, there are no known non-trivial polynomial-time algorithms for \textsc{Cluster Editing} on specific graph classes with a finite set of forbidden induced subgraphs. It is not hard to obtain a polynomial-time algorithm for the problem on the \emph{threshold graphs}, i.e., the $\{P_4, C_4, 2K_2\}$-free graphs, using dynamic programming on its specific co-tree.
However, it appears to be unknown whether removing any of the three induced subgraphs from this set leads to NP-hardness.



In this paper, we focus on open complexity questions for the \textsc{Cluster Editing} problem on cographs and related classes. 
It is worth mentioning that the cograph restriction is more than a mere complexity classification endeavor --- it can be useful to determine how well an equivalence relation (i.e., a cluster graph) can approximate a different type of relation (see for example~\cite{zahn1964approximating}).  In the case of cograph-like relations, our motivations have roots in phylogenetics.  
In this field, gene pairs are classified into \emph{orthologs} and \emph{paralogs}, 
with orthology graphs known to correspond exactly to
cographs~\cite{hellmuth2013orthology,lafond2014orthology,hellmuth2015phylogenomics}.
However, as argued in~\cite{sanchez2021colorful}, most orthology prediction software use clustering libraries and infer a cluster graph of orthologs.  The question that arises is 
then ``how much information is lost by predicting a cluster graph, knowing that the true graph is a cograph''?  This requires finding a cluster graph that is the closest to a given cograph $G$, leading to \textsc{Cluster Editing} on cographs.  
Furthermore, researchers argue that social communities should sometimes be modeled as cographs \cite{jia2015defining} or trivially perfect graphs \cite{DBLP:journals/socnet/NastosG13}, as opposed to cluster graphs as is done in most community detection approaches. This leads to \textsc{Cluster Editing} on cographs or trivially perfect graphs. 
Additionally, an algorithm for the NP-hard problem \textsc{Trivial Perfect Graph Editing}, which is the first one capable of scaling effectively to large real-world graphs, is provided by \cite{DBLP:conf/esa/BrandesHSW15}.

\vspace{3mm}

\noindent
\textbf{Our contributions.}  
We first settle the complexity of \textsc{Cluster Editing} on cographs by showing that it is not only NP-hard, but also W[1]-hard when a parameter $p$ is specified, which represents the 
number of desired clusters.  We use the \textsc{Unary Bin Packing} hardness results from~\cite{DBLP:journals/jcss/JansenKMS13}, which 
also implies an Exponential Time Hypothesis (ETH) lower bound that forbids time $n^{o(p/ \log p)}$ under this parameter.  
In fact, our hardness holds for very restricted classes of cographs, namely graphs obtained by taking two cluster graphs, and adding all possible edges between them (this also correspond to cographs that admit a cotree of height $3$).
Moreover, because cographs have \emph{clique-width} ($cw$) at most 2, this also means that the problem is para-NP-hard in clique-width, and that 
a complexity of the form $n^{g(cw)}$ is unlikely, for any function $g$ (the same actually holds for the \emph{modular-width} parameter and generalizations, see~\cite{gajarsky2013parameterized,lafond_et_al:LIPIcs.MFCS.2023.61}).  In fact, the ETH forbids time $f(p)n^{g(cw)\cdot o(p/ \log p)}$ for any functions $f$ and $g$, which contrasts with the aforementioned subexponential bounds in $pk$~\cite{DBLP:journals/jcss/FominKPPV14}.

The hardness also extends to all superclasses of cographs, such as circle graphs, perfect graphs, and permutation graphs.
On the other hand we show that time $n^{O(p)}$ can be achieved on any cograph, which is almost tight.  This contrasts with the general \textsc{Cluster Editing} problem which is NP-hard when $p = 2$.
In fact, this complexity follows from a more general algorithm for arbitrary graphs that runs in time $n^{O(cw \cdot p)}$, which shows that \textsc{Cluster Editing} is XP in parameter $cw + p$.  Note that our hardness results imply that XP membership in either parameter individually is unlikely, and so under standard assumptions both $cw$ and $p$ must contribute in the exponent of $n$. 

Finally, we aim to find the largest subclass of cographs on which \textsc{Cluster Editing} is polynomial-time solvable.  
The literature mentioned above implies that such a class lies somewhere between $P_4$-free and $\{P_4, C_4, 2K_2\}$-free graphs.  
We improve the latter by showing that \textsc{Cluster Editing} can be solved in time $O(n^3)$ on
$\{P_4, C_4\}$-free graphs, also known as trivially perfect graphs (TPG).  This result is achieved by a characterization of optimal clusterings on TPGs, 
which says that as we build a clustering going up in the cotree, only the largest cluster is allowed to become larger as we proceed.

Our results are summarized in the following, where $n$ is the number of vertices of the graphs, and $p$-\textsc{Cluster Editing} is the variant in which the edge modifications must result in $p$ connected components that are cliques.  We treat $p$ as a parameter specified in the input.
\begin{theorem}
The following results hold:
\begin{itemize}
    \item 
    \textsc{Cluster Editing} is NP-complete on cographs, and solvable in time $O(n^3)$ on trivially perfect graphs.
   \item 
    $p$-\textsc{Cluster Editing} admits an $n^{O(cw\cdot p)}$ time algorithm if a $cw$-expression is given, but admits no $f(p)n^{g(cw)\cdot o(p/ \log p)}$ time algorithm for any functions $f$ and $g$ unless ETH fails.

    \item 
    $p$-\textsc{Cluster Editing} on cographs is NP-complete, and W[1]-hard parameterized by $p$.

    \item 
    $p$-\textsc{Cluster Editing} on cographs admits an $n^{O(p)}$ time algorithm, but admits no $f(p)n^{o(p/ \log p)}$ time algorithm for any function $f$ unless ETH fails.

\end{itemize}
\end{theorem}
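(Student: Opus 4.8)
We establish the four items by one reduction and two dynamic programming algorithms; membership in NP is immediate throughout (guess the edited edge set, and for $p$-\textsc{Cluster Editing} also check that exactly $p$ components result).

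\emph{Hardness (items 2, 3, and the ETH lower bounds in 2 and 4).} The plan is to reduce from \textsc{Unary Bin Packing}: given item sizes $s_1,\dots,s_m$ in unary and $b$ bins of capacity $B=(\sum_i s_i)/b$, decide whether all items fit; by Jansen, Kratsch, Marx and Schlotter this is $\mathsf{W}[1]$-hard in $b$ and, unless ETH fails, has no $f(b)\,\nu^{o(b/\log b)}$ algorithm, $\nu$ being the input size. From such an instance I build the graph $G$ that is the join of two cluster graphs: a side of $b$ \emph{bin gadgets}, each a clique on $N$ vertices for a suitably large polynomial $N$; a side of $m$ \emph{item gadgets}, the $i$-th a clique on $s_i$ vertices; and all edges between the two sides. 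Then $G$ is a cograph with a cotree of height $3$ (and since $cw(G)\le 2$ this also instantiates the bounded-clique-width case), and I set $p:=b$. First one argues that for $N$ a large enough polynomial every optimal clustering puts exactly one complete bin gadget into each of the $p$ clusters: splitting a bin gadget, or placing two together, or using a bin-gadget-free cluster, each loses $\Omega(N)$ or $\Omega(N^2)$, while the best well-structured solution and all possible rearrangement savings are polynomially bounded independently of $N$. Conditioned on this, if $x_{ij}$ is the number of vertices of item $i$ in the cluster holding bin $j$, a direct computation gives editing cost
\[
  (b-1)N\textstyle\sum_i s_i \;+\; \tfrac12\sum_i s_i^2 \;-\; \sum_{i,j} x_{ij}^2 \;+\; \tfrac12\sum_j L_j^2,
  \qquad L_j:=\textstyle\sum_i x_{ij},
\]
and using $\sum_j L_j^2\ge bB^2$ (power mean, as $\sum_j L_j=bB$) together with $\sum_j x_{ij}^2\le s_i^2$ (as $x_{ij}\ge 0$), this is at least a threshold $T$, with equality exactly when every item is concentrated in a single cluster and all loads equal $B$ --- i.e.\ exactly when the packing instance is a yes-instance. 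The reduction is polynomial (unary sizes) and parameter-preserving ($p=b$), so it transfers NP-hardness, $\mathsf{W}[1]$-hardness in $p$, and the lower bound $f(p)\,n^{o(p/\log p)}$; the bounded clique-width of the output yields the $f(p)n^{g(cw)\cdot o(p/\log p)}$ statement, and since the optimum already uses exactly $b$ clusters it also gives NP-completeness for the unconstrained problem on cographs.

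\emph{The $n^{O(cw\cdot p)}$ algorithm (items 2 and 4).} Given a $cw$-expression with $k=cw$ labels (assumed irredundant), I would run a bottom-up DP on the expression tree. A candidate solution is a partition of $V(G)$ into $p$ cliques whose editing cost is (non-edges inside parts) $+$ (edges across parts); the key point is that a $cw$-expression creates edges only through joins $\eta_{i,j}$, so the cost charged by such an operation depends only on how many label-$i$ and label-$j$ vertices each cluster currently holds. Hence the DP state at a node is a \emph{profile} in $\{0,\dots,n\}^{kp}$ recording, for each cluster $c\in[p]$ and label $\ell\in[k]$, the number of label-$\ell$ vertices assigned to $c$; with each profile we store the minimum accumulated value $D-W$, where $D$ (resp.\ $W$) is the number of cross-cluster (resp.\ within-cluster) edges created by joins so far. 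Leaves pick a cluster for the new vertex; relabelings remap labels in the profile; joins leave the profile fixed and update $D-W$ by a closed-form function of it; disjoint unions combine two child profiles by coordinatewise addition (a convolution over pairs of child profiles) and add their values; at the root one returns $\min\big(\sum_c\binom{n_c}{2}+(D-W)\big)$ over profiles with exactly $p$ nonempty clusters, $n_c$ being the size of cluster $c$. Each node costs $n^{O(kp)}$ (unions dominate), for $n^{O(cw\cdot p)}$ total; since the cographs are exactly the graphs of clique-width at most $2$ and a $2$-expression follows from the linear-time-computable cotree, this specializes to $n^{O(p)}$ on cographs.

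\emph{The $O(n^3)$ algorithm on trivially perfect graphs (item 1).} Here I would first prove a structural normal form. Using the cotree of a TPG --- in which every join node consists of some leaves (vertices universal to the rest of that subtree) together with at most one further subtree --- I claim there is an optimal clustering such that, processing the cotree bottom-up, the clusters built so far can be ordered by size and at each step only the currently largest one may absorb new vertices, the others being ``frozen''; the intuition, to be made precise by an exchange argument, is that a universal vertex is always at least as well used enlarging the biggest existing cluster as enlarging or starting a smaller one. Given this, inside any subtree the relevant data about a partial clustering collapses to a couple of parameters --- essentially the size of the unique still-growing cluster and the already-paid cost of the frozen clusters, whose multiset is only ever appended to --- so a DP over the $O(n)$ cotree nodes, combining the at most two relevant child entries together with the universal leaves in $O(n)$ time, runs in $O(n^3)$.

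\emph{Main obstacle.} I expect the technical heart to be twofold: in the reduction, verifying that a merely polynomial $N$ already forces the ``one complete bin gadget per cluster'' structure, and that splitting bin or item gadgets never beats an integral packing, requires a somewhat tedious but elementary case analysis around the displayed cost formula; and in the TPG result, proving the ``only the largest cluster grows'' normal form needs a careful exchange argument over how clusters interact across a universal vertex, after which the dynamic program is routine.
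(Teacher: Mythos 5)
Your plan matches the paper's approach in all essentials: the hardness reduction is the same join-of-two-cluster-graphs construction from bin packing with the same Cauchy--Schwarz/power-mean cost accounting, the XP algorithm is the same $k\times p$ count-matrix DP over an expression tree (the paper phrases it in NLC-width rather than raw clique-width joins, but these are equivalent and the profile is identical), and the TPG result is driven by the same ``only the largest cluster may keep growing'' normal form together with a $D[u,k]$-style DP indexed by node and largest-cluster size.

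One gap to flag in the hardness argument: you set the capacity to $B=(\sum_i s_i)/b$ and then invoke the W[1]- and ETH-hardness of \textsc{Unary Bin Packing} due to Jansen et al., but that result is stated for arbitrary capacity, not for the \emph{perfect} (all-bins-exactly-full) variant that your construction implicitly requires. Your cost formula is tight only when all loads equal $B$, so the reduction genuinely needs hardness of the perfect-packing variant, which does not follow for free: the paper proves this as a separate step (reduce \textsc{Unary Bin Packing} to \textsc{Unary Perfect Bin Packing} by padding with unit items, after first handling the regime $\sum_i a_i \le 0.1kb$, where the instance is trivially YES so padding doesn't blow up the instance size). Without this you cannot conclude W[1]-hardness in $p$ or the $f(p)\,n^{o(p/\log p)}$ bound. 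A second, minor remark: the paper avoids reasoning about fractionally split item gadgets by invoking the critical-clique reduction (each $A_j$ is a critical clique, hence lies wholly in one cluster); your route of allowing splits and then showing $\sum_j x_{ij}^2\le s_i^2$ forces integrality at the optimum also works and is a small genuine variation, just slightly more to check.
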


\section{Preliminaries}

We use the notation $[n] = \{1,\ldots,n\}$. 
For two sets $A$ and $B$, $A \bigtriangleup B$ is the symmetric difference between $A$ and $B$. 
For a graph $G$, $V(G)$ and $E(G)$ are the vertex and edge sets of $G$, respectively, and $G[S]$ is the subgraph induced by $S \subseteq V(G)$.
The complement of $G$ is denoted $\overline{G}$.
Given two graphs $G$ and $H$, the  \emph{disjoint union} $G\cup H$ is the graph with vertex set $V(G) \cup V(H)$ and edge set $E(G) \cup E(H)$.
The \emph{join} $G \vee H$ of two graphs 
$G$ and $H$ is the graph obtained from $G \cup H$ by adding every possible edge $uv$ with $u \in V(G)$ and $v \in V(H)$.


It will be useful to consider two equivalent views on the \textsc{Cluster Editing} problem, in terms of the edge operations to perform to achieve a cluster graph, and in terms of the resulting partition into clusters.
A graph is a \textit{cluster graph} if it is a disjoint union of complete graphs. Let $G = (V, E)$ be a graph and $F \subseteq  V \times V$. If $G' = (V,  E \bigtriangleup F)$ is a cluster graph, then $F$ is called a \emph{cluster editing set}. The edges of $F$ can be divided into two types:  $F \cap E(G)$ are called \emph{deleted edges}, and $F \setminus E(G)$ are called \emph{inserted edges}, where the deleted edges \emph{disconnect} some adjacent vertices in $G$ and the inserted edges \emph{connect} some non-adjacent vertices in $G$ to transform $G$ into $G'$.

Note that the clusters of $G'$ result in a partition of $V(G)$.  Conversely, given any partition $\C$ of $V(G)$, we can easily infer the editing set $F$ that yields the clusters $\C$: it consists of non-edges within the clusters, and of edges with endpoints in different clusters.  To be precise, a \emph{clustering} of $G$ is a partition $\C = \{C_1, \ldots, C_l\}$ of $V(G)$.  The \emph{cluster editing set} of $\C$ is
\[
edit(\C) := \{ uv \in E(G) : u \in C_i, v \in C_j, i \neq j\} \cup \bigcup_{i \in [l]} E(\overline{G[C_i]} ).
\]

We define $cost_G(\mathcal{C}) = |edit(\C)|$. 
An element of $C\in \mathcal{C}$ is called a \emph{cluster}, and the cardinality of $\mathcal{C}$ is called \emph{cluster number}. An \emph{optimal cluster editing set} for $G$ is a cluster editing set for $G$ of minimum size. An \emph{optimal clustering} is a partition $\C$ of $V(G)$ of minimum cost.

A formal definition of \textsc{Cluster Editing} problem is as follows.
\vspace{2mm}

\noindent \textsc{Cluster Editing} \\
\noindent \emph{Input}: A graph $G$ and an integer $k$.\\
\noindent \emph{Question}: Is there a clustering of $G$ with cost at most $k$?

\vspace{2mm}

A clustering with exactly $p$ clusters is called a $p$-\emph{clustering}. 
In \textsc{$p$-Cluster Editing}, the problem is the same, but we must find a $p$-clustering of cost at most $k$. 




We will sometimes use the fact that twins, which are pairs of vertices that have the same closed neighborhood, can be assumed to be in the same cluster.   More generally, a \emph{critical clique} of a graph $G$ is a clique $K$ such that all $v \in K$ have the same neighbor vertices in $V(G)\setminus K$, and $K$ is maximal under this property.

\begin{proposition}[\cite{DBLP:journals/jcss/ChenM12, DBLP:journals/tcs/Guo09}]
\label{critical-clique-prop}
Let $K$ be a critical clique of $G$. For any optimal clustering $\mathcal{C}$ of $G$, there is a $C \in \mathcal{C}$ such that $K \subseteq C$.
\end{proposition}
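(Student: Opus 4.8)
The plan is a local exchange argument by contradiction. Suppose $\mathcal{C}$ is an optimal clustering in which $K$ is split, say $K_i := K \cap C_i \neq \emptyset$ for $i \in [t]$ with $t \geq 2$ and $K \subseteq C_1 \cup \cdots \cup C_t$. I would build a new clustering $\mathcal{C}'$ by moving all of $K$ into one well-chosen $C_j$, $j \in [t]$: its blocks are $(C_j\setminus K_j)\cup K$ together with the sets $C_i\setminus K_i$ for $i\neq j$ (discarding any that become empty) and all blocks of $\mathcal{C}$ disjoint from $K$. The goal is to show $cost_G(\mathcal{C}') < cost_G(\mathcal{C})$, contradicting optimality; since $K$ is then contained in a single block, this is exactly the statement we want.

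First I would isolate the part of the cost affected by the move. Write $cost_G(\mathcal{C}) = X + Y$, where $X$ is the total editing cost of pairs with both endpoints outside $K$, and $Y$ is the cost of pairs with at least one endpoint in $K$. The move keeps every set $C_i\setminus K_i$ intact, so two vertices of $V(G)\setminus K$ lie in a common block of $\mathcal{C}'$ iff they did in $\mathcal{C}$; hence $X$ is unchanged, and it suffices to show $Y(\mathcal{C}') < Y(\mathcal{C})$. To evaluate $Y$, I use the defining property of a critical clique: there is a common set $N := N_G(K)\setminus K$ with $N_G(v) = (K\setminus\{v\})\cup N$ for every $v\in K$. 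Put $\kappa_i = |K\cap C_i|$, $n_i = |C_i\cap N|$, and $m_i = |C_i\setminus(K\cup N)|$. Counting edits directly — every pair inside $K$ is an edge, and a vertex $v\in K$ lying in $C_i$ has exactly $n_i$ of its $N$-neighbours and $m_i$ of its $N$-non-neighbours in its own block — gives
\[
Y(\mathcal{C}) = \binom{|K|}{2} - \sum_{i} \binom{\kappa_i}{2} + |N|\cdot|K| - \sum_{i} \kappa_i (n_i - m_i),
\]
the sums ranging over all blocks of $\mathcal{C}$ (only $C_1,\dots,C_t$ contribute). For $\mathcal{C}'$ the only nonzero $\kappa$-term is $|K|$, in block $j$, so $Y(\mathcal{C}') = |N|\cdot|K| - |K|(n_j - m_j)$.

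Finally I would pick $j\in[t]$ maximizing $n_j - m_j$ and subtract, obtaining
\[
Y(\mathcal{C}) - Y(\mathcal{C}') = \sum_{1\le i < i' \le t} \kappa_i \kappa_{i'} + \sum_{i=1}^{t} \kappa_i \bigl((n_j - m_j) - (n_i - m_i)\bigr),
\]
using $\sum_i \kappa_i = |K|$ to rewrite both terms. The second sum is $\geq 0$ by the choice of $j$, and since $t\geq 2$ and each $\kappa_i\geq 1$, the first sum is at least $\binom{t}{2}\geq 1$; hence $Y(\mathcal{C}) > Y(\mathcal{C}')$ and $cost_G(\mathcal{C}) > cost_G(\mathcal{C}')$, the desired contradiction. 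I expect the only delicate points to be the (easy but worth stating) invariance of $X$ under the move and the careful, double-counting-free derivation of the formula for $Y$; once those are in place the inequality is immediate. Note that the maximality clause in the definition of a critical clique plays no role here — only the common-neighbourhood property is used.
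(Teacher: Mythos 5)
Your proof is correct, but note that the paper does not prove Proposition~\ref{critical-clique-prop}: it is imported as a black box from the cited references on the $2k$ kernel for \textsc{Cluster Editing}, so there is no ``paper's proof'' to compare against. Your argument is the standard local-exchange proof used in that literature. The cost split into $X$ (pairs outside $K$) and $Y$ (pairs touching $K$), the observation that $X$ is invariant under moving $K$ while holding each $C_i \setminus K$ fixed, the explicit formula for $Y$ in terms of $\kappa_i, n_i, m_i$ via the common-neighbourhood property, and the choice of $j$ maximising $n_j - m_j$ are all sound; the algebra checks out, including the identity $\binom{|K|}{2} - \sum_i \binom{\kappa_i}{2} = \sum_{i<i'}\kappa_i\kappa_{i'}$ and the rewrite of the cross terms using $\sum_i \kappa_i = |K|$. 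The strict inequality comes exactly from $t \ge 2$ forcing $\sum_{i<i'}\kappa_i\kappa_{i'} \ge 1$, which you identify correctly. Your closing remark that maximality of $K$ is irrelevant is also accurate --- only the ``clique that is a module'' property enters, so your statement is in fact a mild strengthening of the proposition as stated.
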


\vspace{2mm}

\noindent
\textbf{Cographs and cotrees.}
A \emph{cograph} is a graph that can be constructed using the three following rules: (1) a single vertex is a cograph; (2) the disjoint union $G \cup H$ of cographs $G, H$ is a cograph; (3) the join $G \vee H$ of cographs $G, H$ is a cograph.  
Cographs are exactly the $P_4$-free graphs, i.e., that do not contain a path on four vertices as an induced subgraph.

Cographs are also known for their tree representation.  For a graph $G$, a \emph{cotree} for $G$ is a rooted tree $T$ whose set of leaves, denoted $L(T)$, satisfies $L(T) = V(G)$.  Moreover, every internal node $v \in V(T) \setminus L(T)$ is one of two types, either a $0$-node or a $1$-node, such that $uv \in E(G)$ if and only if the lowest common ancestor of $u$ and $v$ in $T$ is a $1$-node\footnote{To emphasize the distinction between general graphs and trees, we will refer to vertices of a tree as \emph{nodes}.}.  It is well-known that $G$ is a cograph if and only if there exists a cotree $T$ for $G$.  This can be seen from the intuition that leaves represent applications of Rule (1) above, $0$-node represents disjoint unions, and $1$-node represents joins.

A \emph{trivially perfect graph} (TPG), among several characterizations, is a cograph $G$ that has no induced cycle on four vertices, i.e., a $\{P_4, C_4\}$-free graph.  TPGs are also the chordal cographs.  For our purposes, a TPG is a cograph that admits a cotree $T$ in which every $1$-node has at most one child that is not a leaf (see~\cite[Lemma 4.1]{heggernes2006linear}).

\vspace{2mm}

\noindent
\textbf{Clique-width and NLC-width.}
Our clique-width ($cw$) algorithm does not use the notion of $cw$ directly, but instead the analogous measure of \emph{NLC-width}~\cite{heggernes2006linear}.  We only provide the definition of the latter.
Let $G = (V, E, lab)$ be a graph in which every vertex has one of $k$ node labels ($k$-NL), where $lab$ is a function from $V$ to $[k]$.  A single-vertex graph labeled $t$ is denoted by $\bullet_t$.
\begin{definition}
A $k$-node labeled controlled ($k$-NLC) graph is a $k$-NL graph defined recursively as follows. 
\begin{enumerate}
    \item $\bullet_t$ is a $k$-NLC graph for every $t\in [k]$.
    \item  Let $G_1 = (V_1, E_1, lab_1), G_2 = (V_2, E_2, lab_2)$ be two $k$-NLC graphs, relation $S \subseteq [k]^2$, and $E_{add} = \{uv : u\in V_1 \wedge v\in V_2 \wedge (lab_1(u), lab_2(v)) \in S\}$. 
    
    The $k$-NL graph $G = (V, E, lab)$ denoted $G_1 \times _S G_2$ is a $k$-NLC graph defined as: $V = V_1 \cup V_2$, $E = E_1 \cup E_2 \cup E_{add}$, and $lab(u) = lab_1(u)$, $lab(v) = lab_2(v)$ for $u\in V_1$, $v \in V_2$.
    
    \item Let $G = (V, E, lab)$ be a $k$-NLC graph and $R$ be a function from $[k]$ to $[k]$. Then $\circ_R(G) = (V, E, lab')$ is a $k$-NLC graph, where $lab'(v) = R(lab(v))$ for all $v\in V$.
\end{enumerate}
\end{definition}
Intuitively, operation 2 denoted by $\times_S$ takes the disjoint union of the graphs $G_1, G_2$, then adds all possible edges between labeled vertices of $G_1$ and labeled vertices of $G_2$ as controlled by the pairs of $S$. Operation 3 denoted by $\circ_R$ relabels the vertices of a graph controlled by $R$. $NLC_k$ denotes all $k$-NLC graphs. The \emph{NLC-width} of a labeled graph $G$ is the smallest $k$ such that $G$ is a $k$-NLC graph. Furthermore, for a labeled graph $G = (V, E, lab)$, the \emph{NLC-width} of a graph $G' = (V, E)$, obtained from $G$ with all labels removed, equals the \emph{NLC-width} of $G$.

A well-parenthesized expression $X$ built with operations 1, 2, 3 is called a $k$-\emph{expression}. The graph constructed by $X$ is denoted by $G_X$. We associate the $k$-\emph{expression tree} $T$ of $G_X$ with $X$ in a natural way, that is, leaves of $T$ correspond to all vertices of $G_X$ and the internal nodes of $T$ correspond to the operations 2, 3 of $X$. For each node $u$ of $T$, the sub-tree rooted at $u$ corresponds to a \emph{sub} $k$-\emph{expression} of $X$ denoted by $X_u$, and the graph $G_{X_u}$ constructed by $X_u$ is also denoted by $G^u$. Moreover, we say $G^u$ is the \emph{related graph} of $u$ in $T$.  

Let us briefly mention that a graph has clique-width at most $k$ if it can be built using what we will call a $cw(k)$-expression, which has four operations instead: creating a single labeled vertex $\bullet_t$; taking the disjoint union; adding all edges between vertices of a specified label pair $(i, i')$; relabeling all vertices with label $i$ to another label $j$.
We do not detail further, since the following allows us to use NLC-width instead of clique-width.

\begin{proposition}[\cite{DBLP:journals/tcs/GurskiWY16, johansson1998clique}]
\label{cw-nlcw-relation}
For every graph $G$, the clique-width of $G$ is at least the NLC-width of $G$, and at most two times the NLC-width of $G$. 
Moreover, any given $cw(k)$-expression can be transformed into an equivalent $k$-expression in polynomial time (i.e., yielding the same graph).
\end{proposition}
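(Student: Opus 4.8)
The plan is to prove both inequalities by explicit, size-polynomial transformations between the two kinds of expressions; since these transformations are constructive, the ``polynomial time'' claim for turning a $cw(k)$-expression into an equivalent $k$-expression (in the NLC sense) comes for free from the direction that simulates clique-width by NLC-width. Throughout, I write $\eta_{i,j}$ for the clique-width edge-creation operation on the label pair $(i,j)$ and $\rho_{i\to j}$ for the clique-width relabeling of label $i$ to label $j$.

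For the bound $cw(G) \le 2\cdot\mathrm{nlcw}(G)$, I would take a $k$-expression and produce a $cw(2k)$-expression computing the same labeled graph, working bottom-up on the expression tree and maintaining the invariant that each translated subexpression outputs exactly the NLC graph with labels in $[k]$. A leaf $\bullet_t$ stays a leaf. To realize $\circ_R(H)$ for an arbitrary $R : [k] \to [k]$, I first move every class $i$ to the fresh label $k+i$ (the operations $\rho_{i \to k+i}$ cause no collisions since $k+1,\dots,2k$ are unused in $H$), and then move $k+i$ to $R(i)$ for each $i$; the collisions among the second batch of relabelings are exactly the merges $R$ prescribes, so $2k$ relabelings suffice. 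To realize $H_1 \times_S H_2$, I shift $H_2$'s labels into the fresh block with $\rho_{i \to k+i}$, take the disjoint union with $H_1$ (whose labels are still in $[k]$, so the union is legitimate), insert the cross edges with one $\eta_{i, k+j}$ per pair $(i,j) \in S$, and finally shift $H_2$'s labels back with $\rho_{k+i \to i}$. The size grows by only an $O(k)$ factor, so the induction goes through.

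For the reverse bound $\mathrm{nlcw}(G) \le cw(G)$ I would start from a $cw(k)$-expression and push its only ``NLC-foreign'' operation, edge creation $\eta_{i,j}$, toward the leaves. The useful identities are: $\eta_{i,j}$ applied to a single vertex vanishes; $\eta_{i,j}(H_1 \cup H_2) = \eta_{i,j}(H_1) \times_{\{(i,j),(j,i)\}} \eta_{i,j}(H_2)$; and $\eta_{i,j}$ commutes past a relabeling $\rho_{a\to b}$, becoming $\rho_{a\to b}(\eta_{i,j}(H))$ when $b \notin \{i,j\}$ and $\rho_{a\to b}$ of a product of at most two $\eta$'s when $b \in \{i,j\}$ (to account for the class that $\rho$ merged into $i$ or $j$), with the degenerate subcases $a\in\{i,j\}$ handled directly. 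Carrying a \emph{set} $P$ of still-pending label pairs downward, a $\cup$-node becomes $\times_{S}$ with $S=\bigcup_{(i,j)\in P}\{(i,j),(j,i)\}$ applied to the two recursively translated children (each with pending set $P$), a relabeling becomes $\circ_R$ applied to the recursively translated child with the pulled-back pending set, and an $\eta_{i,j}$ simply adds $(i,j)$ to $P$; at a leaf the pending set is discarded. Since we never introduce new labels, the result is a genuine $k$-expression, giving $\mathrm{nlcw}(G) \le k = cw(G)$ for the optimal $k$.

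The delicate point — and the main obstacle — is keeping this last transformation polynomial: naively commuting an $\eta$ down a long chain of relabelings can double its number of copies at each step and blow up exponentially. The fix is that every edge-creation operation is determined by an unordered pair from $[k]$, so at each node of the original parse tree it suffices to carry the pending \emph{set} $P$, of size at most $\binom{k}{2}$; deduplicating after each push keeps the translated expression of size polynomial in $k$ and in the input, which is exactly what the ``polynomial time'' assertion requires. One must also verify carefully that the $\eta$--$\rho$ commutation identities are correct precisely when a relabeling collapses two label classes of different adjacency behaviour, since that is the only place where the naive picture of ``just slide the $\eta$'s past the $\rho$'s'' can break.
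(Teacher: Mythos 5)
The paper does not prove this proposition itself; it cites \cite{DBLP:journals/tcs/GurskiWY16, johansson1998clique} and remarks only that the transformation from a $cw(k)$-expression to a $k$-expression can be obtained via the normal form of a $cw(k)$-expression from \cite{espelage2003deciding}. So there is no in-paper proof to compare against line by line; what you have produced is an independent, self-contained argument, and it is essentially correct.

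Your two directions are the standard ones. For $cw(G)\le 2\cdot\mathrm{nlcw}(G)$, the label-block shift $i\mapsto k+i$ before a $\circ_R$ or before the cross-edge insertions cleanly avoids collisions, and each NLC operation blows up to only $O(k)$ clique-width operations, so both the label bound and the size bound are preserved. For $\mathrm{nlcw}(G)\le cw(G)$, your ``pending set'' trick for pushing $\eta$-operations down to the $\cup$-nodes is correct, and noting that the pending set is a subset of $\binom{[k]}{2}$ is exactly what kills the naive exponential blowup from a chain of relabelings; this also gives the polynomial-time claim. One small imprecision: your commutation case analysis first splits on whether $b\in\{i,j\}$, but the identity $\eta_{i,j}\circ\rho_{a\to b}=\rho_{a\to b}\circ\eta_{i,j}$ fails already when $a\in\{i,j\}$ (even if $b\notin\{i,j\}$), because then $\eta_{i,j}$ becomes vacuous after the relabeling; you do flag the $a\in\{i,j\}$ degeneracy at the end, but it should be excluded from the ``$b\notin\{i,j\}$'' case rather than treated as a footnote, and the correct pull-back rule is: drop $(i,j)$ whenever $a\in\{i,j\}$, otherwise add $(a,j)$ if $b=i$ (resp.\ $(i,a)$ if $b=j$), otherwise keep $(i,j)$ unchanged. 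With that cleanup the argument is sound.

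Compared to the route the paper suggests, your proof is more elementary but requires the explicit $\eta$--$\rho$ commutation bookkeeping; the normal-form route from \cite{espelage2003deciding} pays the bookkeeping once, up front, by rearranging every $cw(k)$-expression so that each block of edge-creations sits immediately above a disjoint union (and touches only labels coming from opposite sides), after which the translation $\oplus$-plus-$\eta$-block $\mapsto\ \times_S$ and $\rho\mapsto\circ_R$ is immediate. Both buy the same bound and the same polynomial running time; yours trades a cleaner translation step for a messier commutation lemma, the normal-form route the reverse.
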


We will assume that our $k$-expressions are derived from a given $cw(k)$-expression. 
Reference~\cite{johansson1998clique} is often cited for this transformation, but seems to have vanished from nature.  The transformation can be done using the normal form of a $cw(k)$-expression described in~\cite{espelage2003deciding}.

\section{Cluster editing on cographs}

We first prove our hardness results for \textsc{Cluster Editing} and $p$-\textsc{Cluster Editing} on cographs, using a reduction from \textsc{Unary Bin Packing}. An instance of \textsc{Unary Bin Packing} consists of a unary-encoded integer multiset $A= [a_1,\ldots,a_n]$, which represent the sizes of $n$ items, and integers $b,k$. We must decide whether the items can be packed into at most $k$ bins that each have capacity $b$.
Thus, we must partition $A$ into $k$ multisets $A_1, \ldots, A_k$, some possibly empty, such that $\sum_{a \in A_i} a \leq b$ for each $i \in [k]$.

For our purpose, we introduce a variant of this problem called \textsc{Unary Perfect Bin Packing}.  The problem is identical, except that the partition of $A$ into $A_1, \ldots, A_k$ must satisfy $\sum_{a \in A_i}a = b$ for each $i \in [k]$.  That is, we must fill all $k$ bins to their maximum capacity $b$.  Note that for this to be possible, $\sum_{i=1}^n a_i = kb$ must hold.
Note that these packing problems can be solved in polynomial time for any fixed $k$ \cite{DBLP:journals/jcss/JansenKMS13}.  We assume henceforth that $k, n \geq 10$, $\max_{i=1}^{n}{a_i} \leq b$, and $\sum_{i=1}^{n} a_i \leq kb$, otherwise, they can be decided in polynomial time. 
It is known that \textsc{Unary Bin Packing} is NP-complete \cite{DBLP:books/fm/GareyJ79}, W[1]-hard parameterized by the number of bins $k$ \cite{DBLP:journals/jcss/JansenKMS13}, and does not admit an $f(k)|I|^{o(k/ \log k)}$ time algorithm for any function $f$ unless Exponential Time Hypothesis (ETH) fails \cite{DBLP:journals/jcss/JansenKMS13}, where $|I|$ is the size of the instance (in unary). The results easily extend to the perfect version.

\begin{proposition}
\label{upbp-is-hard}
\textsc{Unary Perfect Bin Packing} is NP-complete, W[1]-hard parameterized by $k$, and has no $f(k)|I|^{o(k/ \log k)}$ time algorithm for any function $f$ unless ETH fails. 
\end{proposition}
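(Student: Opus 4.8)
The plan is to reduce from \textsc{Unary Bin Packing}, whose hardness (NP-completeness, W[1]-hardness in $k$, and the $f(k)|I|^{o(k/\log k)}$ lower bound under ETH) is already established in~\cite{DBLP:journals/jcss/JansenKMS13}. Given an instance $(A = [a_1,\ldots,a_n], b, k)$ of \textsc{Unary Bin Packing}, we want to build an equivalent instance $(A', b', k')$ of \textsc{Unary Perfect Bin Packing}. The natural idea is padding: we are allowed to use at most $k$ bins of capacity $b$, with total item size $\sum_i a_i \le kb$. Define the deficit $d = kb - \sum_i a_i \ge 0$. If we simply add $d$ items of size $1$ to $A$, then the new total is exactly $kb$, and a perfect packing into $k$ bins of capacity $b$ exists if and only if the original items fit into at most $k$ bins of capacity $b$ (the slack in each original bin gets filled by unit items). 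Set $A' = A \cup [\underbrace{1,\ldots,1}_{d}]$, $b' = b$, $k' = k$. This preserves $k$ as the parameter, and since the instance is unary, $d \le kb$ is polynomially bounded in the input size, so $|I'|$ is polynomial in $|I|$ and in particular $|I'|^{o(k/\log k)}$ translates back to an $f(k)|I|^{o(k/\log k)}$ bound, giving all three hardness consequences simultaneously.

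First I would verify the forward direction: if $A$ can be packed into bins $A_1,\ldots,A_k$ with $\sum_{a\in A_i} a \le b$, then each bin has residual capacity $r_i = b - \sum_{a\in A_i}a \ge 0$ with $\sum_i r_i = d$; distribute the $d$ unit items so that bin $i$ receives exactly $r_i$ of them, yielding $\sum = b$ in every bin. Conversely, a perfect packing of $A'$ into $k$ bins of capacity $b$ restricts to a packing of the original items into at most $k$ bins of capacity at most $b$ (dropping the unit items only decreases bin loads), so it is a valid \textsc{Unary Bin Packing} solution. Membership in NP is immediate since a proposed partition can be checked in polynomial time.

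One technical point to handle cleanly is the standing assumption in the paper that $k, n \ge 10$, $\max_i a_i \le b$, and $\sum_i a_i \le kb$: these are exactly the conditions under which the reduction makes sense, and instances violating them are polynomial-time decidable, so we may assume them on the source side; on the target side, $A'$ inherits $\max \le b$ (unit items are tiny) and $\sum = kb$ by construction, and the number of items $n + d \ge n \ge 10$. A minor edge case is $d = 0$, where $A' = A$ and the two problems coincide trivially. I do not expect a serious obstacle here — the padding argument is routine; the only thing to be careful about is confirming that the parameter $k$ is unchanged and that the unary encoding keeps $|I'|$ polynomial, which is what makes the W[1] and ETH lower bounds transfer verbatim.
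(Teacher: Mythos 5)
Your padding reduction --- adding $d = kb - \sum_i a_i$ unit items to turn a bin-packing instance into a perfect-bin-packing instance --- is exactly the approach taken in the paper, and your forward/backward correctness argument is sound. The one place where you gloss over something the paper treats carefully is the size of the padding: you assert that $d \le kb$ is polynomially bounded in $|I|$ ``since the instance is unary,'' but the paper's problem statement only requires the multiset $A$ to be unary-encoded and refers to $b, k$ merely as integers, and its standing assumptions ($\max_i a_i \leq b$, $\sum_i a_i \leq kb$) do not by themselves prevent $kb$ from being much larger than $\sum_i a_i$. The paper closes this by first making $k$ even (adding one item of size $b$) and then case-splitting: if $\sum_i a_i \le 0.1 kb$, it shows the instance is always a YES-instance (so a trivial target instance can be output), and otherwise $\sum_i a_i > 0.1kb$ forces $d < 0.9 kb < 9\sum_i a_i$, a linear bound on the blowup. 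You would need some such argument --- or, alternatively, a preliminary reduction ensuring $b \leq \sum_i a_i$ and $k \leq n$ (either fails only on trivially-YES instances), which gives a quadratic bound $d \le n\sum_i a_i$. Polynomial blowup does suffice for the $n^{o(k/\log k)}$ ETH transfer since a constant factor in the exponent is absorbed by $o(\cdot)$, but this bound must actually be established rather than asserted.
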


\begin{proof}
Clearly, \textsc{Unary Perfect Bin Packing} is in NP. Let $(A,b,k)$ be an instance of \textsc{Unary Bin Packing}, where $A = [a_1,\ldots,a_n]$. We assume that $k$ is even, as otherwise we increase $k$ by $1$ and add to $A$ an item of value $b$.
If $\sum_{i=1}^{n} a_i \leq 0.1kb$, we argue that we have a YES instance: we can pack the largest $k/2$ of the $n$ items into $k/2$ bins, and pack the other $n - k/2$ items in the remaining bins, as follows.  
Assume w.l.o.g. $b\geq a_1 \geq \cdots \geq a_{k/2} \geq a_{k/2+1} \geq \cdots \geq a_{n}$. We have $k/2 \cdot a_{k/2}\leq \sum_{i=1}^{k/2} a_i \leq 0.1kb$, so $a_{k/2} \leq 0.2b$. This means that $a_{k/2 +1}, \ldots, a_n \leq 0.2 b$ and we can always select a batch of items with these sizes such that the total size of a batch is in the interval $[0.8b, b]$ until there are no items left (the last batch may have a size less than $0.8b$). Thus, we can pack the items with sizes $a_{k/2 +1}, \ldots, a_n$ using at most $\frac{0.1kb}{0.8b} + 1 = 0.125k + 1 < k/2$ bins.

So assume henceforth that $\sum_{i=1}^{n} a_i > 0.1kb$. 
Construct an instance $(B,b,k)$ for \textsc{Unary Perfect Bin Packing}, where $B = [a_1,\ldots,a_n, 1,\ldots,1]$ consists of all integers of $A$ and $kb-\sum_{i=1}^{n} a_i$ 1s. Since $0.1kb < \sum_{i=1}^{n} a_i \leq kb$, the new instance size is bounded by a linear function of the original instance size. Moreover, the new instance can be obtained in polynomial time. It is easy to verify that $(A,b,k)$ is a YES instance of \textsc{Unary Bin Packing} if and only if $(B,b,k)$ is a YES instance of \textsc{Unary Perfect Bin Packing}. 
\end{proof}


\begin{lemma}
\label{ce-is-w-hard}
Given a unary-encoded integer multiset $A= [a_1,\ldots,a_n]$ for the sizes of $n$ items, and integers $b,k$ satisfying $kb = \sum_{i=1}^{n} a_i$, there is a polynomial-time algorithm which outputs a cograph $G$ and an integer $t$ such that, 
\begin{enumerate}
    \item for any optimal clustering $\mathcal{C}$ of $G$, $|\mathcal{C}| = k$ and the cost of $\mathcal{C}$ is at least $t$;
    \item the $n$ items can be perfectly packed by $k$ bins with capacity $b$ if and only if there is a clustering of $G$ with cost at most $t$.
\end{enumerate}
Moreover, $G$ is obtained by taking a join of two cluster graphs.
\end{lemma}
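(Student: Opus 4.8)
The plan is to build, from the packing instance $(A, b, k)$, a cograph $G$ that is the join $G_1 \vee G_2$ of two cluster graphs, where one side encodes the ``items'' and the other side provides a rigid ``skeleton'' of $k$ forced clusters that the items must be distributed among. Concretely, I would let $G_2$ be a disjoint union of $k$ cliques, each of size $M$ for some large polynomial $M = M(n, b, k)$ (say $M \gg n b$), so that by Proposition~\ref{critical-clique-prop} each of these cliques is a critical clique and in any optimal clustering stays together; the large size forces an optimal clustering to keep these $k$ ``bin cliques'' essentially intact and separate, giving the cluster number exactly $k$. For $G_1$, I would take a disjoint union of $n$ cliques, the $i$-th of size $a_i$ (again each such clique is critical, so its vertices stay together and effectively behave as one indivisible ``item'' of weight $a_i$). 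Since $G = G_1 \vee G_2$, every item-clique is completely adjacent to every bin-clique, so ``assigning item $i$ to bin $j$'' corresponds to merging the $i$-th item-clique into the $j$-th bin cluster.

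Next I would compute $cost_G(\mathcal{C})$ for a clustering $\mathcal{C}$ of this forced shape, i.e., one obtained by choosing an assignment $\phi : [n] \to [k]$ and letting cluster $C_j = (\text{bin clique } j) \cup \bigcup_{i : \phi(i) = j} (\text{item clique } i)$. The edit cost splits into: (a) deleted join-edges between clusters, which is a fixed quantity minus a term that rewards putting items together with their bin and with each other, and (b) inserted edges, which is zero inside each item-clique and inside each bin-clique (they are already cliques) — the only subtlety is there are no missing edges to add inside a merged cluster either, because the join makes everything already adjacent. So the only cost is from the join-edges that get cut between different clusters. A direct count shows $cost_G(\mathcal{C}) = \binom{\text{total join edges}}{} - \sum_j \binom{\ldots}{}$, and after expanding, the assignment-dependent part becomes a separable-plus-quadratic function of the bin loads $s_j := \sum_{i : \phi(i) = j} a_i$, of the form $\text{const} - \sum_j \bigl(M s_j + \binom{s_j}{2}\bigr)$ or similar. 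Because $\sum_j s_j = \sum_i a_i = kb$ is fixed, the linear term $M\sum_j s_j$ is constant, and minimizing cost amounts to \emph{maximizing} $\sum_j \binom{s_j}{2}$ subject to $\sum_j s_j = kb$; by convexity this is maximized exactly when the loads are as unequal as possible, but capped — and here is where the structure bites: I also need a penalty that is minimized precisely when every $s_j = b$. This suggests instead making the ``reward'' term have the shape $-\sum_j f(s_j)$ with $f$ \emph{convex} so that spreading evenly is forced, OR adding a second gadget on the $G_2$ side whose interaction makes the objective $\sum_j (s_j - b)^2$. I would set $t$ to be the value achieved exactly when all $s_j = b$, and argue (i) any clustering of the forced shape has cost $\ge t$ with equality iff it is a perfect packing, and (ii) the large parameter $M$ guarantees every optimal (or cost-$\le t$) clustering is of the forced shape.

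The main obstacle — and the step I would spend the most care on — is this last point: showing that no ``cheating'' clustering can beat $t$ by breaking a critical clique or by merging two bin-cliques or splitting the leftover item vertices unnaturally. Proposition~\ref{critical-clique-prop} handles breaking critical cliques for free. Merging two bin-cliques into one cluster costs roughly $0$ in deleted join-edges between them (they're joined!) but this is exactly why I need the bin-cliques large: I must instead ensure the \emph{objective} $\sum_j \binom{s_j + M \cdot [\text{extra bins}]}{2}$-type term, or a dedicated penalty, makes a $k' < k$ clustering strictly worse than $t$. The cleanest route is to choose $M$ so large (polynomial in $n, b$, which keeps the reduction unary-polynomial) that any deviation from ``exactly $k$ clusters, bin-cliques intact and separate'' changes the cost by $\pm\Omega(M)$ while a perfect-vs-imperfect packing only changes it by $O(b^2) = o(M)$ — so the top-order term in $M$ pins down the global shape, and the lower-order term in the $a_i$'s then detects perfect packing. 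Once the forced shape is established, claim~1 (optimal clusterings have exactly $k$ clusters and cost $\ge t$) and claim~2 (cost $\le t$ iff perfect packing exists) both follow from the load-balancing computation. Finally, $G = G_1 \vee G_2$ is a join of two cluster graphs by construction, the cotree has height $3$, and the whole construction is clearly polynomial-time since $A$ is unary-encoded and $M$ is polynomially bounded, which also transfers the W[1]-hardness and ETH lower bounds of Proposition~\ref{upbp-is-hard} through $p = k$.
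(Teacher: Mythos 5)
Your construction is exactly the paper's: two cluster graphs joined together, one with $k$ large cliques of size $M = h$ acting as bins and one with $n$ item-cliques of sizes $a_1,\ldots,a_n$, and Proposition~\ref{critical-clique-prop} pinning each critical clique inside a single cluster. So far, this matches the paper precisely.

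However, your cost analysis contains a sign error that sends you down a wrong path. You claim that ``the only cost is from the join-edges that get cut between different clusters'' and that the assignment-dependent part has the shape $\text{const} - \sum_j \binom{s_j}{2}$, so that one must \emph{maximize} $\sum_j \binom{s_j}{2}$; you then conclude that balance is \emph{penalized} and that the gadget needs to be modified (by adding an extra penalty of the form $\sum_j (s_j - b)^2$). Both of these steps are incorrect. First, the cost is \emph{not} only from deleted join edges: when two distinct item-cliques $A_i, A_{i'}$ are placed in the same bin cluster, they are non-adjacent inside $G_1$, so the $a_i a_{i'}$ missing edges between them must be \emph{inserted}. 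Second, the deleted join edges contribute a quantity that is completely independent of the assignment: each vertex of $V(G_1)$ loses its join edges to exactly $k-1$ of the bin cliques no matter which bin it joins, giving a flat cost of $(k-1)h\sum_i a_i$. Putting these together, for a clustering with exactly $k$ bin-respecting clusters and loads $s_j$,
\[
cost = (k-1)ah + \tfrac{1}{2}\textstyle\sum_{j=1}^{k} s_j^2 - \tfrac{1}{2}\sum_{i=1}^n a_i^2,
\]
a \emph{positive} quadratic penalty in the loads. By Cauchy--Schwarz (or convexity of $x^2$) this is minimized precisely when $s_1 = \cdots = s_k = b$, which already gives the required objective $t = (k-1)ah + \tfrac12 kb^2 - \tfrac12\sum a_i^2$ without any extra gadget. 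So the construction you described works as-is; the modification you proposed is not needed, and if carried out naively could spoil the cograph structure or break the ``join of two cluster graphs'' form.

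A second, smaller gap: you gesture at ``choose $M$ large enough that any deviation costs $\Omega(M)$'' but do not actually verify it. The paper does this by a clean case split (some cluster contains two bin cliques $\Rightarrow$ cost $\geq h^2$; more than $k$ clusters $\Rightarrow$ cost $\geq (k-1)ah + h$; exactly $k$ clusters with one bin each $\Rightarrow$ cost $\geq t$, and $t < (k-1)ah + a^2 < (k-1)ah + h < h^2$ for $h=(nka)^{10}$). You would need an argument of this shape to establish item~1 of the lemma, in particular that every optimal clustering has exactly $k$ clusters. Your intuition for why a polynomially large $h$ suffices is correct, but the proof is not there.
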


\begin{proof}
For the remainder, let us denote 
$a := \sum_{i=1}^{n} a_i$ and $s := \sum_{i=1}^{n} a_i^2$. Construct an instance $(G,t)$ for \textsc{Cluster Editing} as follows. First, add two cluster graphs $I = B_1\cup \ldots \cup B_k$ and $J = A_1\cup \ldots \cup A_n$ into $G$, where $B_i$ is a complete graph with $h := (nka)^{10}$ vertices for every $i\in [k]$, and $A_j$ is a complete graph with $a_j$ vertices for every $j\in [n]$. Then, connect each $v\in V(I)$ to all vertices of $V(J)$. See Figure~\ref{fig:cograph-hard}.  One can easily verify that $G$ is a cograph obtained from the join of two cluster graphs. In addition, define $t := (k-1)ah + \frac{1}{2} (kb^2 - s)$. Clearly, $t$ is an integer since $s \equiv a \equiv kb \equiv kb^2 $ (mod 2).
Let $\mathcal{I} = \{V(B_1), \dots, V(B_k)\}$ and $\mathcal{J} = \{V(A_1), \dots, V(A_n)\}$. Clearly, every element in $\mathcal{I} \cup \mathcal{J}$ is a critical clique of $G$. 

\begin{figure}[ht]
    \centering
    \includegraphics[width=0.5\linewidth]{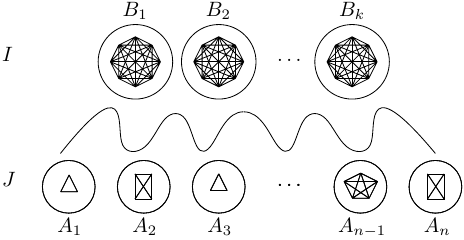}
    \caption{An illustration of the construction.  In the subgraph $I$, each $B_i$ is a ``large enough'' complete graph, and in the subgraph $J$, each $A_j$ is a complete graph with $a_j$ vertices.  The wiggly line indicates that all possible edges between $I$ and $J$ are present (there are no edges between two $B_i$'s, and no edge between two $A_j$'s).}
    \label{fig:cograph-hard}
\end{figure}

\begin{claim}
\label{opt-CES-is-t}
Let $\mathcal{C}$ be an optimal clustering of $G$, and  $F$ be the cluster editing set for this solution. Then, $|\mathcal{C}| = k$, and each element of $\mathcal{C}$ is a superset of exactly one element from $\mathcal{I}$. Moreover, $|F|\geq t$, and $|F| = t$ if and only if all elements of $\mathcal{C}$ have the same cardinality.
\end{claim}

\begin{proof}
Let $\mathcal{C} = \{P_1, \ldots, P_l\}$. Since each element in $\mathcal{I} \cup \mathcal{J}$ is a critical clique of $G$, each such element is a subset of some cluster in $\mathcal{C}$ by Proposition \ref{critical-clique-prop}. Note that each cluster of $\mathcal{C}$ could contain $0, 1$, or more cliques of $\mathcal{I}$.  We split the possibilities into three cases and provide bounds on $|F|$ for each case.
First, if there exists a cluster of $\mathcal{C}$ that includes at least two critical cliques from $\mathcal{I}$, then $F$ contains $h^2$ inserted edges to connect two critical cliques from $\mathcal{I}$, and thus $|F| \geq h^2$. 
Assume instead that every cluster of $\mathcal{C}$ includes at most one critical clique from $\mathcal{I}$. Then, we have $l \geq k$. 
Suppose  $l \geq k + 1$. Then, there are $l - k$ clusters in $\mathcal{C}$ that do not contain any critical cliques from $\mathcal{I}$.  Let $\mathcal{C}' \subseteq \mathcal{C}$ consist of these $l - k$ clusters and $U = \bigcup_{P \in \mathcal{C}'} P$. Then, for every $v\in U$, $kh$ deleted edges are required in $F$ to disconnect $v$ from all vertices of $V(I)$, and for every $u\in V(J)\setminus U$, $(k-1)h$ deleted edges are required in $F$ to disconnect $u$ from all vertices of $V(I)\setminus P$, where $P$ is the clique of $\mathcal{I}$ contained in the same cluster as $u$. Therefore, 
\begin{align*}
|F|  & \geq kh|U| + (k-1)h |V(J)\setminus U| \\
 & = h|U| + (k -1)h|U|  + (k-1)h |V(J)\setminus U|\\
 & = h|U| + (k-1)ha\\
 & \geq (k-1)ah + h.
\end{align*}

Now assume that $l = k$. Then, every element of $\mathcal{C}$ includes exactly one critical clique from $\mathcal{I}$. Consider each $i\in [k]$ and assume w.l.o.g. that $V(B_i) \subseteq P_i$. Let $W_i = P_i \setminus V(B_i)$ and let $\{\mathcal{J}_1, \ldots, \mathcal{J}_k\}$ be a partition  
of $\mathcal{J}$ such that, for each $i \in [k]$, the union of all elements of $\mathcal{J}_i$ is $W_i$ (such a partition exists because each element of $\mathcal{J}$ is entirely contained in some $W_i$). 
Firstly, $(k-1)h$ deleted edges are required in $F$ to disconnect each $v\in W_i$ from all vertices of $V(I) \setminus V(B_i)$.  Secondly, for each $i \in [k]$, $\frac{1}{2}\sum_{S, S' \in \mathcal{J}_i} |S||S'|$ inserted edges are required in $F$ to connect all vertices of $W_i$. 
One can easily check that for each $i \in [k]$, $F$ accounts for every edge with an endpoint in $P_i$ and an endpoint outside, and accounts for all non-edges within $P_i$.
Therefore, all the possible edges of $F$ are counted.  Thus,
\begin{align*}
|F| &= \sum_{i=1}^{k} \left(|W_i| (k-1)h + \frac{1}{2}\sum_{ S, S' \in \mathcal{J}_i} |S||S'| \right) \\
&=   |V(J)| (k-1) h + \frac{1}{2} \sum_{i=1}^{k} \left( \left(\sum_{ S \in \mathcal{J}_i} |S| \right)^2 - \sum_{ S \in \mathcal{J}_i} |S|^2\right)  \\
&=  (k-1)ah + \frac{1}{2} \sum_{i=1}^{k} |W_i|^2 - \frac{1}{2} \sum_{ S \in \mathcal{J}} |S|^2  \\
&=  (k-1)ah + \frac{1}{2} \sum_{i=1}^{k} |W_i|^2 - \frac{s}{2}.
\end{align*}
We can now compare $|F|$ from the lower bounds obtained in the previous two cases, as follows.
\begin{align*}
|F| & = (k-1)ah + \frac{1}{2} \sum_{i=1}^{k} |W_i|^2 - \frac{s}{2}\\
   & <  (k-1)ah +  \sum_{i=1}^{k} |W_i|^2 + \sum_{1\leq i, j\leq k} |W_i| |W_j|\\
   & =  (k-1)ah +  \left(\sum_{i=1}^{k} |W_i|\right)^2\\
   & =  (k-1)ah +  a^2\\
   & <  (k-1)ah + h < h^2.
\end{align*}
The last line implies that having $|\mathcal{C}| = k$, with each element of $\mathcal{C}$ a superset of exactly one element from $\mathcal{I}$, always achieves a lower cost than the other possibilities.

Next, consider the lower bound of $|F| \geq t$ and the conditions on equality.  Using the same starting point, 
\begin{align}
|F| &=   (k-1)ah + \frac{1}{2} \sum_{i=1}^{k} |W_i|^2 - \frac{s}{2}  \nonumber \\
&=   (k-1)ah + \frac{1}{2k} \left(\sum_{i=1}^{k} |W_i|^2\right)\left(\sum_{i=1}^{k} 1^2\right) - \frac{s}{2}  \nonumber \\
&\geq    (k-1)ah + \frac{1}{2k} (\sum_{i=1}^{k} |W_i|)^2 - \frac{s}{2} \label{optimal_equal_t}\\
&=   (k-1)ah + \frac{1}{2k} a^2 - \frac{s}{2} = t.   \nonumber
\end{align}
In (1), we used the Cauchy-Schwarz inequality, and the two sides are equal if and only if $|W_1| = \cdots = |W_k|$.
In addition, $|W_1| = \cdots = |W_k|$ if and only if $|P_1| = \cdots = |P_k|$ (recall that $W_i = P_i \setminus V(B_i)$ and $|V(B_i)| = h$ for each $i \in [k]$).  This proves every statement of the claim.
\end{proof}

Armed with the above claim, we immediately deduce the first statement of the theorem.  We now show the second part, i.e., the $n$ items of $A$ can fit perfectly into $k$ bins of capacity $b$ if and only if we can make $G$ a cluster graph with at most $t$ edge modifications.

\medskip

\noindent 
$(\Leftarrow)$: Assume there is a clustering $\mathcal{C}$ of $G$ with cost at most $t$. 
By Claim \ref{opt-CES-is-t}, the size of any optimal cluster editing set for $G$ is at least $t$. 
Thus, the cost of $\mathcal{C}$ must be equal to $t$, and $\mathcal{C}$ must be optimal.  Claim~\ref{opt-CES-is-t} then implies that $\mathcal{C} = \{P_1, \ldots, P_k\}$, 
such that $|P_1| = \cdots = |P_k|$ and $V(B_i) \subseteq P_i$ for each $i \in [k]$ (w.l.o.g.).
Moreover, each critical clique of $\mathcal{J}$ is a subset of some element of $\mathcal{C}$ by Proposition \ref{critical-clique-prop}. So there is a partition $\mathcal{J}_1, \ldots, \mathcal{J}_k$ of $\mathcal{J}$ such that $P_i \setminus V(B_i) = \bigcup_{S\in \mathcal{J}_i} S$ for every $i \in [k]$. This means that $\mathcal{J}$ can be partitioned into $k$ groups such that the number of vertices in each group equals $b$. 

\medskip

\noindent 
$(\Rightarrow)$: Assume the $n$ items can be perfectly packed into $k$ bins with capacity $b$.  We construct a cluster editing set $F$ of size at most $t$. 
Consider each $i \in [k]$.
Let $S_i \subseteq A$ consist of the sizes of the items packed in the $i$-th bin, and define a corresponding cluster graph $B'_i = \bigcup_{a_j \in S_i} A_j$.  
We put in $F$ a set of $(k-1)h$ deleted edges, by disconnecting each $v\in V(B'_i)$ from every vertex in $V(I)\setminus V(B_i)$ in $G$.  We then add to $F$ the set of $\frac{1}{2} \sum_{V(A_j), V(A_r) \subseteq V(B'_i)} a_j a_r$ of inserted edges into $F$ to make $B_i'$ a complete graph, where $j, r \in [n]$. 
Applying the modifications in $F$ results in a clustering $\mathcal{C} = \{P_1, \ldots, P_k\}$ such that $P_i = V(B_i) \cup V(B'_i)$ for each $i\in [k]$.  Moreover, the cardinality of the cluster editing set is
\begin{align*}
|F| &=  \sum_{i=1}^{k} \left((k - 1)h |V(B'_i)| + \frac{1}{2} \sum_{V(A_j), V(A_r) \subseteq V(B'_i)} a_ja_r \right) \\
&=   (k - 1)h \sum_{i=1}^{n} |V(A_i)|  + \frac{1}{2}
\sum_{i=1}^{k} \left( \left(\sum_{V(A_j) \subseteq V(B'_i)} a_j \right)^2 - \sum_{V(A_j) \subseteq V(B'_i)} a_j^2 \right)\\
&=   (k - 1)h |V(J)|  + \frac{1}{2}
\sum_{i=1}^{k} \left( |V(B'_i)|^2 - \sum_{V(A_j) \subseteq V(B'_i)} a_j^2 \right) \\
&=   (k - 1)ah  + \frac{1}{2}
\sum_{i=1}^{k} b^2  -  \frac{1}{2} \sum_{V(A_j) \subseteq V(J)} a_j^2 \\
&=   (k - 1)ah  + \frac{1}{2}
k b^2  -  \frac{1}{2} s = t.
\end{align*}
Thus, $G$ has a clustering with cost at most $t$. 
\end{proof}

\begin{theorem}
\label{hardness-theorem-ce-cograph}
The following hardness results hold:
\begin{itemize}
    \item 
    \textsc{Cluster Editing} on cographs and $p$-\textsc{Cluster Editing} on cographs are NP-complete.

    \item 
    $p$-\textsc{Cluster Editing} on cographs is W[1]-hard parameterized by $p$.

    \item 
    $p$-\textsc{Cluster Editing} on cographs admits no $f(p)n^{o(p/ \log p)}$ time algorithm for any function $f$ unless ETH fails, where $n$ is the number of vertices of the input graph.
\end{itemize}
\end{theorem}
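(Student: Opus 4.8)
The plan is to derive all four items as essentially immediate consequences of Lemma~\ref{ce-is-w-hard} combined with the hardness of \textsc{Unary Perfect Bin Packing} recorded in Proposition~\ref{upbp-is-hard}. The first thing to notice is that the hypothesis $kb = \sum_{i=1}^{n} a_i$ demanded by Lemma~\ref{ce-is-w-hard} is exactly the constraint defining a \textsc{Unary Perfect Bin Packing} instance, so the lemma already \emph{is} a polynomial-time reduction from \textsc{Unary Perfect Bin Packing} to \textsc{Cluster Editing} on cographs: the instance $(A,b,k)$ is sent to $(G,t)$, and by statement~2 the $n$ items fit perfectly into $k$ bins of capacity $b$ if and only if $G$ admits a clustering of cost at most $t$. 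Since both \textsc{Cluster Editing} and $p$-\textsc{Cluster Editing} are trivially in NP (a low-cost clustering serves as a witness), NP-hardness of \textsc{Cluster Editing} on cographs follows at once.

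For the $p$-variant I would set $p := k$ and invoke statement~1 of Lemma~\ref{ce-is-w-hard}: every optimal clustering of $G$ has exactly $k$ clusters and cost at least $t$, hence any clustering of cost at most $t$ is necessarily optimal and in particular uses exactly $p = k$ clusters. Therefore ``$G$ has a clustering of cost at most $t$'' and ``$G$ has a $p$-clustering of cost at most $t$'' are equivalent statements, so the same map is a correct reduction to $p$-\textsc{Cluster Editing} on cographs, yielding NP-hardness there as well. Because the map runs in polynomial time and sends the bin-packing parameter $k$ to the clustering parameter $p = k$, it is moreover a parameterized reduction; combined with the W[1]-hardness of \textsc{Unary Perfect Bin Packing} parameterized by $k$ (Proposition~\ref{upbp-is-hard}), this gives W[1]-hardness of $p$-\textsc{Cluster Editing} on cographs parameterized by $p$.

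For the ETH lower bound I would track how the two instance sizes relate. In the construction, $G$ has $|V(G)| = kh + a$ vertices with $a = \sum_{i=1}^{n} a_i$ and $h = (nka)^{10}$ (the $n$ inside $h$ being the number of items); writing $|I|$ for the unary size of the bin-packing instance, the number of items, $k$, and $a$ are all at most $|I|$, so $|V(G)| \le |I|^{O(1)}$. Hence a hypothetical $f(p)\,|V(G)|^{o(p/\log p)}$-time algorithm for $p$-\textsc{Cluster Editing} on cographs, pre-composed with the reduction and with $p = k$, would solve \textsc{Unary Perfect Bin Packing} in time $f(k)\,|I|^{O(1)\cdot o(k/\log k)} = f(k)\,|I|^{o(k/\log k)}$, which is ruled out under ETH by Proposition~\ref{upbp-is-hard}.

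In terms of difficulty, essentially all of the real content has already been absorbed into Lemma~\ref{ce-is-w-hard} and the auxiliary Proposition~\ref{upbp-is-hard}, so what remains is bookkeeping, and the two points that must be made explicit are: (i) the equivalence between the budget-$t$ decision version and the $p$-clustering decision version with $p = k$, which rests entirely on statement~1 of the lemma forcing the cluster count of \emph{any} cheap solution to equal $k$; and (ii) the polynomial bound $|V(G)| \le |I|^{O(1)}$, which is what allows the $o(k/\log k)$ exponent to survive the substitution $n \mapsto |I|^{O(1)}$ in the ETH argument. Neither is a genuine obstacle, but both are necessary for the W[1] and ETH claims to be rigorous.
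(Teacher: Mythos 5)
Your proof takes essentially the same route as the paper: both invoke Lemma~\ref{ce-is-w-hard} as a Karp reduction from \textsc{Unary Perfect Bin Packing}, use item~1 of that lemma to show that any clustering of cost at most $t$ must have exactly $k$ clusters (so the budget-$t$ and $p$-clustering decision questions coincide under $p=k$), and then pull back the NP-, W[1]-, and ETH-hardness from Proposition~\ref{upbp-is-hard}. The only difference is that you spell out the size bound $|V(G)| \le |I|^{O(1)}$ needed for the ETH transfer, which the paper leaves implicit in the phrase ``parameter-preserving Karp reduction''; this is a worthwhile clarification but not a different argument.
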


\begin{proof}
\textsc{Cluster Editing} and $p$-\textsc{Cluster Editing} are clearly in NP. 
Lemma \ref{ce-is-w-hard} directly implies a Karp reduction from \textsc{Unary Perfect Bin Packing} to \textsc{Cluster Editing} on cographs. Thus, \textsc{Cluster Editing} on cographs is NP-hard by Proposition \ref{upbp-is-hard}.
By item 1 of Lemma \ref{ce-is-w-hard}, for a partition $\mathcal{C}$ of $V(G)$, $\mathcal{C}$ is a clustering of $G$ with cost at most $t$ if and only if $\mathcal{C}$ is a $k$-clustering of $G$ with cost at most $t$, so Lemma \ref{ce-is-w-hard} also implies 
a parameter-preserving Karp reduction from \textsc{Unary Perfect Bin Packing} in parameter $k$ to $p$-\textsc{Cluster Editing} on cographs in parameter $p$ such that $p = k$.
Therefore, combining the results in Proposition \ref{upbp-is-hard}, we have $p$-\textsc{Cluster Editing} on cographs is NP-hard, W[1]-hard parameterized by $p$, and has no $f(p)n^{o(p/ \log p)}$ time algorithm for any function $f$ unless ETH fails.
\end{proof}

Since cographs have a constant clique-width~\cite{courcelle2000upper}, we have the following.

\begin{corollary}
$p$-\textsc{Cluster Editing} admits no $f(p)n^{g(cw)\cdot o(p/ \log p)}$ time algorithm for any functions $f$ and $g$ unless ETH fails.
\end{corollary}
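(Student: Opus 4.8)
The plan is to obtain the corollary as an immediate consequence of Theorem~\ref{hardness-theorem-ce-cograph} together with the bounded clique-width of cographs. First, I would recall that every cograph has clique-width at most $2$~\cite{courcelle2000upper}; consequently, when $p$-\textsc{Cluster Editing} is restricted to cograph inputs, the parameter $cw$ is always a constant ($\le 2$).

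Next, I would argue by contradiction. Suppose there were functions $f, g$ and an algorithm solving $p$-\textsc{Cluster Editing} on general graphs in time $f(p)\,n^{g(cw)\cdot o(p/\log p)}$. Here the little-$o$ means there is a vanishing function $\varepsilon$ with $\varepsilon(p) \to 0$ (depending only on $p$, not on $cw$ or $n$) such that the running time is bounded by $f(p)\,n^{g(cw)\cdot \varepsilon(p)\cdot p/\log p}$. Feeding this algorithm only cograph instances, we have $cw \le 2$, so $g(cw) \le c$ for the constant $c := \max\{g(1), g(2)\}$. The running time on cographs is then at most $f(p)\,n^{c\,\varepsilon(p)\cdot p/\log p}$, and since $c\,\varepsilon(p) \to 0$ this is an $f(p)\,n^{o(p/\log p)}$ time algorithm for $p$-\textsc{Cluster Editing} on cographs --- contradicting Theorem~\ref{hardness-theorem-ce-cograph} unless ETH fails.

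The whole thing is a short bookkeeping exercise, so there is no real obstacle; the only subtlety worth spelling out is the convention that in the expression $f(p)\,n^{g(cw)\cdot o(p/\log p)}$ the hidden vanishing factor is allowed to depend on $p$ but not on $cw$ (otherwise the statement would not be meaningful). Once that is fixed, collapsing $g(cw)$ into a constant on the bounded-clique-width class of cographs is all that is needed. I would also remark in passing that the same reasoning transfers the NP-hardness and W[1]-hardness of Theorem~\ref{hardness-theorem-ce-cograph} to the clique-width parameterization, giving para-NP-hardness in $cw$ and ruling out any $n^{g(cw)}$ algorithm, as noted in the introduction.
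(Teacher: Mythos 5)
Your argument is correct and essentially identical to the paper's: the paper's proof of the corollary is just the one-line observation that cographs have constant clique-width, combined with Theorem~\ref{hardness-theorem-ce-cograph}, and your proposal simply fills in the (entirely straightforward) bookkeeping of why $g(cw)$ collapses to a constant on cographs and why that recovers the forbidden $f(p)n^{o(p/\log p)}$ running time. Your remark about the convention that the hidden vanishing factor in $o(p/\log p)$ depends only on $p$ is a sensible clarification, but it does not change the substance of the argument.
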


\subsection*{An $n^{O(p \cdot cw)}$ time algorithm}

We propose a dynamic programming algorithm over the $k$-expression tree $T$ of $G$ as described in the preliminaries, where $k$ is at most twice the clique-width of $G$.  The idea is that, for each node $u$ of $T$ and every way of distributing the vertices of $V(G^u)$ among $p$ clusters and $k$ labels, we compute the minimum-cost clustering conditioned on such a distribution.  The latter is described as a matrix of vertex counts per cluster per label, and the cost can be computed by combining the appropriate cost for matrices of the children of $u$.

The entries of all matrices discussed here are natural numbers between $0$ and $n$. $M_{i,:}$ and $M_{:,j}$ represent the $i$-th row and $j$-the column of matrix $M$, respectively. We write $m_{i,j}$ for the entry of $M$ in row $i$ and column $j$. 
The sum of all entries in $M_{i,:}$ and $M_{:,j}$ are denoted by $sum\{M_{i,:}\}$ and $sum\{M_{:,j}\}$, respectively. We use $\{S_i\}_{i=1}^n$ to denote a sequence $(S_1, \ldots, S_n)$.

Let $M$ be a $k \times p$ matrix and $G$ be a $k$-NL graph. An $M$-\emph{sequencing} of $G$ is a sequence $\{C_i\}_{i=1}^p$ of subsets of $V(G)$ such that 
\begin{enumerate} 
    \item the non-empty subsets of this sequence form a partition $\mathcal{C}$ of $V(G)$; 
    \item the number of vertices in $C_j$ labeled $i$ is equal to $m_{i,j}$ for every $i \in [k], j \in [p]$.   
\end{enumerate}
The partition $\mathcal{C}$ obtained from $\{C_i\}_{i=1}^p$ is called an $M$-\emph{clustering}, and the \emph{cost} of the $M$-\emph{sequencing} is the cost of the clustering $\mathcal{C}$ of $G$, which is $cost_G(\mathcal{C})$.  

Clearly, an $M$-sequencing of $G$ exists if and only if the sum of all entries of $M$ equals the number of vertices of $G$ and the sum of all entries of the $i$-th row of $M$ equals the number of vertices in $G$ labeled $i$ for every $i\in [k]$. The cost of $M$-sequencing is defined as $\infty$ if it does not exist. In addition, we say $M$ is a \emph{well-defined matrix} for $G$ if an $M$-sequencing of $G$ exists.  An \emph{optimal $M$-sequencing} of $G$ is an $M$-sequencing of $G$ of minimum cost.


\begin{theorem}
$p$-\textsc{Cluster Editing} has an $O(n^{2p \cdot cw + 4})$ time algorithm if a $k$-expression is given, where $k = cw$ and  $n$ is the number of vertices of the input graph.
\end{theorem}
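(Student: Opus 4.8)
\textbf{The plan} is a bottom-up dynamic program over the $k$-expression tree $T$ of $G$, with $k = cw$. For each node $u$ of $T$ and each $k \times p$ matrix $M$ with entries in $\{0,\ldots,n\}$, the table entry $OPT[u][M]$ stores the cost of an optimal $M$-sequencing of the related graph $G^u$, with value $\infty$ when no $M$-sequencing exists (i.e.\ when $M$ is not well-defined for $G^u$). Since a $p$-clustering has exactly $p$ nonempty parts, the answer is read at the root $r$ (where $G^r = G$) as $\min OPT[r][M]$ over all matrices $M$ well-defined for $G$ with $sum\{M_{:,j}\} \geq 1$ for every $j \in [p]$; the instance is a YES-instance iff this minimum is at most $k$. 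Correctness follows by induction on $T$, with one inductive step per node type, described below.

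\textbf{Leaves and relabel nodes.} For a leaf $\bullet_t$, the well-defined matrices are precisely those whose unique nonzero entry is a $1$ in row $t$, and each such $M$ has an $M$-sequencing of cost $0$; all other matrices get $\infty$. For a relabel node $u = \circ_R(u')$, the operation changes no vertex and no edge, so there is a cost-preserving bijection between clusterings of $G^{u'}$ and of $G^u$: a clustering with matrix $M'$ at $u'$ has, at $u$, the row-projection $m_{i,j} = \sum_{i' \in R^{-1}(i)} m'_{i',j}$. Hence I initialize $OPT[u][\cdot] = \infty$ and, for every $M'$ with $OPT[u'][M'] < \infty$, compute its projection $M$ and set $OPT[u][M] \gets \min\{OPT[u][M], OPT[u'][M']\}$.

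\textbf{Union nodes --- the crux.} Let $u = u_1 \times_S u_2$. The sets $V(G^{u_1})$ and $V(G^{u_2})$ partition $V(G^u)$, and $E(G^u)$ is the disjoint union of $E(G^{u_1})$, $E(G^{u_2})$, and the controlled set $E_{add}$. Consequently, given an $M$-sequencing $\{C_i\}_{i=1}^p$ of $G^u$ and writing $C_i^t = C_i \cap V(G^{u_t})$, the sequences $\{C_i^1\}_{i=1}^p$ and $\{C_i^2\}_{i=1}^p$ are $M_1$- and $M_2$-sequencings of $G^{u_1}$ and $G^{u_2}$ for the matrices $M_1, M_2$ they induce, with $M = M_1 + M_2$; conversely, the index-wise union of any $M_1$-sequencing and $M_2$-sequencing is an $M$-sequencing of $G^u$ with $M = M_1 + M_2$. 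Writing $\mathcal{C}_1, \mathcal{C}_2$ for the clusterings so obtained, the set $edit(\mathcal{C})$ partitions into the edits internal to $G^{u_1}$, those internal to $G^{u_2}$, and the cross edits, so
\[
cost_{G^u}(\mathcal{C}) = cost_{G^{u_1}}(\mathcal{C}_1) + cost_{G^{u_2}}(\mathcal{C}_2) + \Delta(M_1, M_2).
\]
Here $\Delta(M_1, M_2)$ is a closed form: a cross pair with $G^{u_1}$-label $i$ and $G^{u_2}$-label $i'$ is an edge iff $(i,i') \in S$, is deleted iff its endpoints lie in different clusters, and a cross non-edge is inserted iff its endpoints lie in the same cluster, so
\[
\Delta(M_1, M_2) = \sum_{(i,i') \in S} \Bigl( sum\{(M_1)_{i,:}\}\, sum\{(M_2)_{i',:}\} - \sum_{j=1}^{p} (m_1)_{i,j} (m_2)_{i',j} \Bigr) + \sum_{(i,i') \notin S} \sum_{j=1}^{p} (m_1)_{i,j}(m_2)_{i',j}.
\]
Thus $OPT[u][M] = \min_{M_1 + M_2 = M} \bigl( OPT[u_1][M_1] + OPT[u_2][M_2] + \Delta(M_1, M_2) \bigr)$, evaluated by enumerating all pairs $(M_1, M_2)$ of $k \times p$ matrices whose entrywise sum stays in $\{0,\ldots,n\}$.

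\textbf{Running time and the main obstacle.} For each of the $kp$ entry positions there are $\binom{n+2}{2}$ admissible splits $a+b \leq n$, so a union node enumerates $\binom{n+2}{2}^{kp} \leq n^{2kp}$ pairs (for $n \geq 4$; smaller $n$ is trivial), and for each pair $\Delta$ is computed in $O(k^2 p)$ time after an $O(kp)$ precomputation of the row sums. There are $n-1$ union nodes, and the $n$ leaves and the (polynomially many) relabel nodes each cost only $O((n+1)^{kp} \cdot kp)$, which is dominated. Since $k = cw \leq n$ and we may assume $p \leq n$, the total is $O(n^{2kp} \cdot k^2 p \cdot n) = O(n^{2p\cdot cw + 4})$. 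The only genuinely non-routine part is the union-node decomposition lemma, where one must verify that (i) $M$-sequencings of $G^u$ correspond exactly to pairs of $M_1$- and $M_2$-sequencings with $M = M_1 + M_2$, (ii) no edit is double-counted --- which holds because $\{V(G^{u_1}), V(G^{u_2})\}$ partitions $V(G^u)$ and every vertex pair is either internal to one side or a cross pair governed solely by $S$, so $edit(\mathcal{C})$ splits accordingly --- and (iii) the cross-edit count depends only on the entries of $M_1$ and $M_2$, not on vertex identities, which is what legitimizes the formula for $\Delta$. Everything else (base case, relabel step, and the arithmetic of the time bound) is bookkeeping.
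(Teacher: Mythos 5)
Your proof is correct and follows essentially the same dynamic-programming strategy over the $k$-expression tree as the paper: the same table indexed by node and $k\times p$ label-by-cluster count matrix, the same base and relabel cases, and the same split-and-minimize recurrence at $\times_S$ nodes. Your cross-cost $\Delta(M_1,M_2)$ is algebraically identical to the paper's $h(M^v,M^w,S)$ (expand the paper's leading term $\sum_j sum\{M^v_{:,j}\}sum\{M^w_{:,j}\}$ over pairs $(i,i')$, split into $(i,i')\in S$ and $(i,i')\notin S$, and the $-2\sum_{(i,i')\in S}$ term cancels one of the two copies), so the two derivations agree term for term; your presentation is arguably slightly more transparent since it separates deleted cross-edges from inserted cross-non-edges directly.
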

\begin{proof}
Let $T$ be a $k$-expression tree of $k$-NLC graph $G = (V, E, lab)$. For every $u\in V(T)$, let $G^u =(V^u, E^u, lab^u)$ be the related graph of $u$, and let $V^u_i$ denote the vertices of $V^u$ labeled $i$ for each $i\in [k]$. Let $M^u = (m^z_{i,j})$ be a well-defined matrix of $G^u$. 
Assume $u$ is a node of $T$ corresponding to operation $\times_S$, and $v, w$ are the two children of $u$.  We define $h(M^v, M^w, S)$, for later use, which equals 
\begin{align*}
&\sum_{j \in [p]}  sum\{M^v_{:,j}\} \cdot sum\{M^w_{:,j}\} +  \sum_{(i, i')\in S} sum\{M^v_{i,:}\} \cdot sum\{M^w_{i',:}\} - 2\sum_{(i, i')\in S} \sum_{j\in [p]} m^v_{i,j} m^w_{i',j}.
\end{align*}
Next, we will first provide the dynamic programming algorithm for this problem, and then prove its correctness. 


Let dynamic programming table $D(u, M^u)$ denote the cost of an optimal $M^u$-sequencing of $G^u$. For a leaf $u$ of $T$ and all well-defined $M^u$ of $u$, $D(u, M^u) : = 0$. For an internal vertex $u$ of $T$ with corresponding operation $\times_S$, and children $v, w$, we have
\begin{align*}
D(u, M^u) : =  & \min_{M^u = M^v + M^w} \{D(v, M^v) + D(w, M^w) + h(M^v, M^w, S)\}.
\end{align*}
For an internal node $u$ of $T$ with corresponding operation $\circ_R$, and child $v$, we have  
\begin{align*}
D(u, M^u) : =  & \min_{M^v} D(v, M^v),
\end{align*}
where the minimization is taken over every well-defined matrix $M^v$ for $G^v$ that satisfies $M^u_{i',:} = \sum_{(i,i')\in R} M^v_{i,:}$ for every $i'\in [k]$.

Since every entry of the $k\times p$ matrix $M^u$ is at most $n$, each $u \in V(T)$ has at most $n^{pk}$ tables. For the $\times_S$ vertices, one entry $D(u, M^u)$ 
can be computed in time $O(n^{pk + 3})$ by enumerating all the possible $O(n^{pk})$ matrices $M^v$, from which $M^w$ can be deduced from $M^u - M^v$ in time $O(pk) = O(n^2)$, and computing $h(M^v, M^w, S)$ in time $O(n^3)$ (we treat $p$ and $k$ as upper-bounded by $n$ for simplicity).  Thus the set of all entries for 
 $u$ can be computed in $O(n^{2pk +3})$ time if all tables of its children are given. For the $\circ_R$ vertices, 
 each entry can be computed in time $O(n^{pk + 3})$ similarly by enumerating the $M^v$ matrices and checking the sum conditions, for a total time of $O(n^{2pk + 3})$ as well.
Since $T$ has $O(n)$ nodes, we can compute all tables for all nodes, from leaves to root, of $T$ in 
$O(n^{2pk + 4})$ time, which is 
$O(n^{2p \cdot cw + 4})$ if we assume $cw = k$.
Let $r$ be the root of $T$. The output of our algorithm is the minimum $D(r, M^r)$ such that $M^r$ has no zero columns (note that if we require \emph{at most} $p$ clusters, we can simply tolerate zero columns).

We now prove that the recurrences are correct.  The leaf case is easy to verify, so we assume inductively that the table is filled correctly for the children of an internal node $u$.
Suppose that $u$ corresponds to operation $\times_S$ and consider the value we assign to an entry $D(u, M^u)$. Assume the two children of $u$ are $v, w$. Suppose $\{C_i^u\}_{i=1}^p$ is an optimal $M^u$-sequencing for $G^u$ with $M^u$-clustering $\mathcal{C}^u$.  
Let $\mathcal{C}^v = \{C \cap V^v : C \in \mathcal{C}^u \} \setminus \{\emptyset\}$ and $\mathcal{C}^w = \{C \cap V^w : C \in \mathcal{C}^u \} \setminus \{\emptyset\}$.
Since $G^u$ is an union of $G^v$ and $G^w$ by adding some edges between them controlled by $S$, we claim that the cost of the optimal $M^u$-sequencing $cost_{G^u}(\mathcal{C}^u)$ equals
\begin{align*}
\sum _{y\in\{v, w\}} cost_{G^y}(\mathcal{C}^y) + \sum_{C \in \mathcal{C}^u} |C \cap V^v||C \cap V^w| &+ \sum_{(i, i')\in S}|V^v_i||V^w_{i'}| \\
&- 2\sum_{(i, i')\in S} \sum_{C \in \mathcal{C}^u} |C \cap V^v_i||C \cap V^w_{i'}|, 
\end{align*} 
justified as follows. First, 
any inserted or deleted edge of $\C^u$ whose endpoints are both in $V^v$, or both in $V^w$, is counted exactly once, namely in the first summation $\sum_{y\in\{v, w\}} cost_{G^y}(\mathcal{C}^y)$.
Then, consider the inserted/deleted edges with one endpoint in $V^v$ and the other in $V^w$. Observe that a non-edge between $V^v$ and $V^w$ is counted once in the second summation if and only if it is an inserted edge.  That non-edge is not counted in the third summation, because the latter only counts edges of $G^u$, and it is not subtracted in the last term for the same reason. 
Thus the expression counts inserted edges between $V^v, V^w$ in $\C^u$ exactly once, and no other such non-edge.
Now, consider an edge $e$ between $V^v$ and $V^w$, which exists because of $S$. 
If $e$ is a deleted edge, its endpoints are in different clusters and it is counted exactly once, in the third summation.  If $e$ is not deleted, its endpoints are in the same cluster and it is counted in both the second and third summation.  
On the other hand, that edge is subtracted twice in the last term, and so overall it is not counted.  We deduce that the expression correctly represents the cost of $\C^u$ in $G^u$.

For each $y \in \{v, w\}$, we further define $M^y = (m_{i,j}^y)$ such that $m_{i,j}^y = |C_j^u \cap V^y_i|$ for all $i, j$. Clearly, we have $M^v + M^w = M^u$. Now, we claim that $\{C_i^u \cap V^y\}_{i=1}^p$ is an \emph{optimal} $M^y$-sequencing of $G^y$ with $M^y$-clustering $\mathcal{C}^y$ for every $y \in \{v, w\}$.
Roughly speaking, this can be seen by observing that merging any $M^v$ and $M^w$-clustering will result, in the cost expression given above, in the same values for the three last summations.  Since the choice of $M^y$ clusterings only affects the first summation, they should be chosen to minimize it. To see this in more detail, assume for contradiction that $\{D_i^v\}_{i=1}^p$ is an $M^v$-sequencing of $G^v$ with $M^v$-clustering $\mathcal{D}^v$ and  $\{D_i^w\}_{i=1}^p$ is an $M^w$-sequencing of $G^w$ with $M^w$-clustering $\mathcal{D}^w$ such that $\sum _{y\in\{v, w\}} cost_{G^y}(\mathcal{D}^y) < \sum _{y\in\{v, w\}} cost_{G^y}(\mathcal{C}^y)$. Then, $\{D_i^v\cup D_i^w\}_{i=1}^p$ is a $M^u$-sequencing for $G^u$ since $M^v + M^w = M^u$ and $V^v + V^w = V^u$. Furthermore, the cost of the $M^u$-sequencing $\{D_i^v\cup D_i^w\}_{i=1}^p$ is $\sum _{y\in\{v, w\}} cost_{G^y}(\mathcal{D}^y) + h(M^v, M^w, S)$, where $h(M^v, M^w, S)$ also equals
\begin{align*}
& \sum_{C \in \mathcal{C}^u} |C \cap V^v||C \cap V^w| + \sum_{(i, i')\in S}|V^v_i||V^w_{i'}| - 2\sum_{(i, i')\in S} \sum_{C \in \mathcal{C}^u} |C \cap V^v_i||C \cap V^w_{i'}|
\end{align*}
based on the definitions of $M^v$ and $M^w$. As a result, $\sum _{y\in\{v, w\}} cost_{G^y}(\mathcal{D}^y) + h(M^v, M^w, S) < \sum _{y\in\{v, w\}} cost_{G^y}(\mathcal{C}^y) + h(M^v, M^w, S) = cost_{G^u}(\mathcal{C}^u)$, a contradiction. 
Therefore, $\C^u$ is obtained by merging optimal $M^v$ and $M^w$-clusterings. 
Since our value of $D(u, M^u)$ eventually considers $D(v, M^v) + D(w, M^w)$, which by induction contain the optimal costs, we have that $D(u, M^u)$ is at most the cost of $\C^u$.  It is also easy to see that each possible entry considered in the minimization corresponds to an $M^u$-clustering of $G^u$, which cannot be better than $\C^u$, and so it follows that $D(u, M^u)$ is also at least the cost of $\C^u$.  Thus our value of $D(u, M^u)$ is correct.

Consider an internal node $u$ with corresponding operation $\circ_R$. Let $v$ be the child of $u$. Suppose  $\{C^u_i\}_{i = 1}^p$ is an optimal $M^u$-sequencing for $G^u$ with $M^u$-clustering $\mathcal{C}^u$. We define $M^v = (m^v_{i,j})$ such that $m^v_{i,j} = |V^v_i \cap C^u_j|$. Then, $M^u_{i',:} = \sum_{(i,i')\in R} M^v_{i,:}$ for each $i \in [k]$. 
Now, we claim that $\{C^u_i\}_{i = 1}^p$ is also an optimal $M^v$-sequencing of $G^v$ with $M^v$-clustering $\mathcal{C}^u$. Otherwise, assume for contradiction that $\{C^v_i\}_{i = 1}^p$ is an $M^v$-sequencing of $G^v$ with $M^v$-clustering $\mathcal{C}^v$ such that $cost_{G^v}(\mathcal{C}^v) < cost_{G^v}(\mathcal{C}^u)$. Then, $\{C^v_i\}_{i = 1}^p$ is also a $M^u$-sequencing of $G^u$ with $M^u$-clustering $\mathcal{C}^v$, because (1) the non-empty sets of $\{C^v_i\}_{i = 1}^p$ define a partition of $V^v$, thus also define a partition of $V^u$ since $V^v = V^u$. (2) In $C^v_j$ of $V^v$, the number of vertices labeled $i$ is $m^v_{i,j}$. In addition, $G^u$ is obtained from $G^v$ by relabeling vertices controlled by function $R$. Hence, in $C^v_j$ of $V^u$, the number of vertices labeled $i'$ is $\sum_{(i,i')\in R} m^v_{i,j}$, which equals $m^u_{i',j}$ for each $i'\in [k]$. Thus, the cost of the $M^u$-sequencing $\{C^v_i\}_{i = 1}^p$ is $cost_{G^u}(\mathcal{C}^v) = cost_{G^v}(\mathcal{C}^v) < cost_{G^v}(\mathcal{C}^u) = cost_{G^u}(\mathcal{C}^u)$, a contradiction. As before, this shows that our value of $D(u, M^u)$ is both an upper and lower bound on the cost of $\C^u$, and is therefore correct.
\end{proof}

Recall that cographs have clique-width at most $2$.  Moreover, a $cw(2)$-expression and then a $2$-expression can easily be derived from the cotree (in fact, dynamic programming over the cotree directly could be more efficient).  We therefore have the following consequence, which we note is not conditioned on receiving a $k$-expression.

\begin{corollary}
$p$-\textsc{Cluster Editing} on cographs admits an $O(n^{4p + 4})$ time algorithm.
\end{corollary}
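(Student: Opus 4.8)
The plan is to obtain this as a direct specialization of the previous $O(n^{2p\cdot cw+4})$-time algorithm with $cw=2$. The only subtlety is that the running time of that algorithm is stated under the assumption that a $k$-expression is supplied, so I must argue that for a cograph such an expression (with two labels) can be produced in polynomial time from the input graph itself, making the resulting algorithm unconditional on the class of cographs.

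First I would recall the well-known fact that the cotree $T$ of a cograph $G$ can be computed in linear time. From $T$ I would build an expression bottom-up while maintaining the invariant that every constructed subexpression produces a graph in which all vertices carry the label $1$. A leaf becomes $\bullet_1$. A $0$-node (disjoint union) with children $c_1,\dots,c_m$ is handled by binarizing it into a left-deep chain of operations $\times_\emptyset$ applied to the (already all-label-$1$) child expressions; no edges are added and all labels remain $1$. A $1$-node (join) with children $c_1,\dots,c_m$ is binarized into a left-deep chain of operations $\times_{\{(1,1)\}}$; since all vertices of both operands have label $1$, this step inserts exactly the cross edges between the two parts while leaving the within-part edges intact, and it preserves the all-label-$1$ invariant. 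This shows directly that every cograph admits a $1$-expression in the NLC sense, and the expression tree is produced in linear time. (Alternatively, one may build a $cw(2)$-expression from the cotree in the same fashion and convert it to a $2$-expression via Proposition~\ref{cw-nlcw-relation}; either route gives clique-width at most $2$.)

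Feeding this expression into the previous theorem with $k=cw=2$ then yields an $O(n^{2p\cdot 2+4})=O(n^{4p+4})$-time algorithm for $p$-\textsc{Cluster Editing} on cographs, which is the claimed bound. I would additionally remark, as the excerpt hints, that one can skip the translation step and run the dynamic program directly over the cotree, treating each $0$-node as a $\times_S$ operation with $S=\emptyset$ and each $1$-node as a $\times_S$ operation with $S=\{(1,1)\}$ (binarizing as above); this keeps the table sizes at $O(n^p)$ per node and gives the same $O(n^{4p+4})$ bound — indeed an even better $O(n^{2p+4})$ — with a smaller hidden constant.

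Since the corollary is essentially a weakening/specialization of an already-proved theorem, there is no genuine obstacle. The only point that needs (routine) verification is the correctness and polynomial-time bound of the cotree-to-expression construction, i.e., checking that the all-label-$1$ invariant is maintained through binarized $0$- and $1$-nodes and that the graph realized by the expression is exactly $G$; everything else is immediate from the theorem.
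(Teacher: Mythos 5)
Your proposal is correct and follows essentially the same route as the paper: obtain a width-$2$ NLC expression for the cograph from its cotree and invoke the $O(n^{2p\cdot k + 4})$-time algorithm with $k=2$. Your extra observation that cographs directly admit NLC-width-$1$ expressions (binarize the cotree, use $\times_\emptyset$ at $0$-nodes and $\times_{\{(1,1)\}}$ at $1$-nodes, maintaining the single label $1$ throughout) is a genuine sharpening: plugging a $1$-expression into the theorem gives $O(n^{2p+4})$, improving on the claimed $O(n^{4p+4})$ --- the paper only alludes to this in its parenthetical remark that dynamic programming directly over the cotree could be faster. One small slip in the write-up: immediately after establishing the $1$-expression you write ``feeding this expression into the previous theorem with $k=cw=2$,'' but if the $1$-expression is what you feed in, the running-time parameter is $k=1$, not $2$ (the theorem's displayed bound $O(n^{2p\cdot cw+4})$ presupposes $k=cw$, but the proof actually yields $O(n^{2pk+4})$ for whatever $k$-expression is supplied). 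The $O(n^{4p+4})$ bound follows from the $2$-expression route, or from upper-bounding $k$ by $cw=2$; since the corollary claims only the weaker bound, the slip is harmless.
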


\section{Cluster Editing on Trivially Perfect Graphs}

We now show that \textsc{Cluster Editing} can be solved in cubic time on TPGs, first providing some intuitions.  Consider a TPG $G$, and recall that it admits a cotree $T$ in which every $1$-node has at most one child that is not a leaf.  Such a cotree can be found in linear time \cite{DBLP:journals/siamcomp/CorneilPS85}. Let $\C$ be an optimal clustering of $G$.  Let $v \in V(T)$ and let $X \subseteq V(G)$ be the set of leaves that descend from $v$, which we call a \emph{clade}.  Suppose that there are two clusters $C_1, C_2 \in \C$ that intersect with $X$, but that also satisfy $C_1 \setminus X \neq \emptyset, C_2 \setminus X \neq \emptyset$ (we will say that $X$ ``grows'' in $C_1$ and $C_2$). 
We can show that there is an alternate optimal clustering obtainable by moving $C_1 \cap X$ into a new cluster by itself, and merging $(C_1 \setminus X) \cup C_2$ into a single cluster\footnote{Note that this is where the properties of TPGs are used: this works because if we consider the neighbors of $X$ outside of $X$, they are all children of $1$-nodes on the path from $v$ to the root.  They thus form a clique, which makes the merging $(C_1 \setminus X) \cup C_2$ advantageous as it saves the deletions of edges from that clique.  Also, we may need to exchange the roles of $C_1$ and $C_2$.}.  In this manner, $X$ ``grows'' in one less cluster, and we can repeat this until it grows in at most one cluster.  
This property allows dynamic programming over the clades of the cotree, as we only need to memorize optimal solutions with respect to the size of the only cluster that can grow in the subsequent clades.


For two vertex-disjoint subsets of vertices $X, Y$, we denote $E_G(X, Y) = \{uv \in E(G) : u \in X, v \in Y\}$ and $e_G(X, Y) = |E_G(X, Y)|$.  
We further denote $\overline{e}_G(X, Y) = |X||Y| - e_G(X, Y)$, which is the number of non-edges between $X$ and $Y$.  We drop the subscript $G$ from these notations if it is clear from the context.
Two disjoint subsets $X, Y \subseteq V(G)$ are \emph{neighbors} if $e(X, Y) = |X||Y|$, and they are \emph{non-neighbors} if $\overline{e}(X, Y) = |X||Y|$.  That is, every possible edge is present, and absent, respectively.
Note that using this notation, for a given clustering $\C = \{C_1, \cdots, C_l\}$, the set of inserted edges is $\bigcup_{C \in \C} E(\overline{G[C]})$ and the set of deleted edges is $\bigcup_{1 \leq i < j \leq l} E(C_i, C_j)$.

Let $G$ be a cograph with cotree $T$.  For $v \in V(T)$, we denote by $L(v)$ the set of leaves that descend from $v$ in $T$ (note that $L(v) \subseteq V(G)$).  We call $L(v)$ the \emph{clade of $v$}.  
We say that $X \subseteq V(G)$ is a \emph{clade} of $T$ if $X$ is a clade of some $v \in V(T)$.  In this case, we say that $X$ is \emph{rooted at $v$}.
The set of clades of $T$ is defined as $clades(T) = \{L(v) : v \in V(T)\}$.

We first show a technical property of optimal solutions that will be useful.  The lemma statement is illustrated in Figure~\ref{fig:tpglemma}.

\begin{figure}[ht]
\begin{center}
\includegraphics[width=8cm]{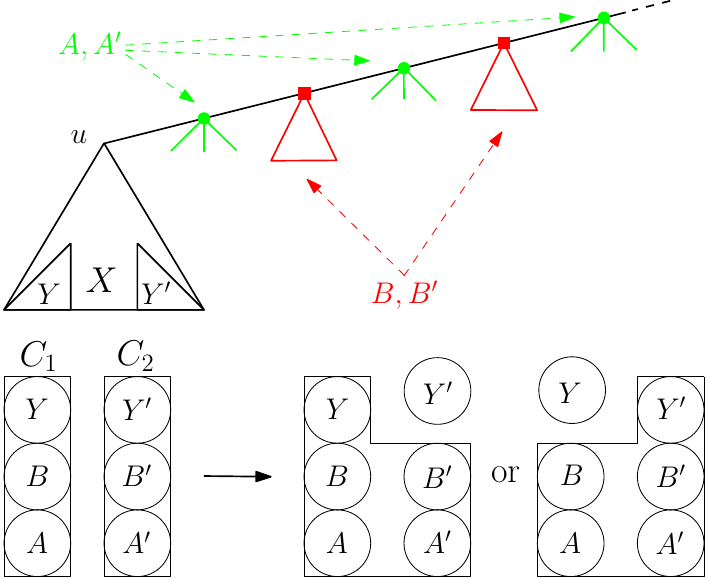}
\caption{Illustration of Lemma~\ref{lem:two-clusters}.  The tree shown is the cotree of $G$, with disc vertices being $1$-nodes and square vertices $0$-nodes. Here, $Y = X \cap C_1$ and $Y' = X \cap C_2$.  If $|A| \geq |B|$ and $|A'| \geq |B'|$, then rearranging $C_1$ and $C_2$ in one of the two ways shown also gives an optimal clustering.}
\label{fig:tpglemma}
    
\end{center}
\end{figure}

\begin{lemma}
\label{lem:two-clusters}
Let $G$ be a TPG with cotree $T$, let $u \in V(T)$, and let $X= L(u)$.  
Let $\mathcal{C}$ be an optimal clustering of $G$ and let $C_1, C_2 \in \mathcal{C}$ be distinct clusters that both intersect with $X$. 
Let $U$ be the set of strict ancestors of $u$ in $T$.  
Let $A$ (resp. $A'$) be the set of vertices of $C_1$ (resp. $C_2$) that are children of $1$-nodes in $U$, and let $B = C_1 \setminus (X \cup A)$ 
(resp. $B' = C_2 \setminus (X \cup A')$).

If $|A| \geq |B|$ and $|A'| \geq |B'|$, then at least one of 
the alternate clusterings $(\mathcal{C} \setminus \{C_1, C_2\}) \cup \{ C_1 \cup A' \cup B', C_2 \cap X \}$ or $(\mathcal{C} \setminus \{C_1, C_2\}) \cup \{ C_1 \cap X, C_2 \cup A \cup B \}$ has cost at most $cost_G(\mathcal{C})$.
\end{lemma}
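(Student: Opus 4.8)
The plan is to reduce the statement to a comparison of edit costs inside $G[C_1\cup C_2]$, to extract the relevant cotree structure, and then to invoke the optimality of $\mathcal C$. First I would set up the cotree picture. Since $C_1\neq C_2$ both meet $X$, we have $|X|\ge 2$ (for $|X|=1$ the statement is vacuous), so $u$ is internal and its strict ancestors $U$ form a path $\pi$ from the root of $T$ to $u$. Writing $\ell(v)$ for the depth on $\pi$ of the least common ancestor of $v$ and $u$, and using that in a TPG-cotree every $1$-node has at most one non-leaf child, one obtains: (i) each vertex of $A\cup A'$ is a leaf child of a $1$-node on $\pi$, these are exactly the vertices outside $X$ adjacent to all of $X$, the set $N^\ast:=N_G(X)\setminus X$ is a clique, and $A=N^\ast\cap C_1$, $A'=N^\ast\cap C_2$; (ii) for $w\in B\cup B'$ the least common ancestor of $w$ and $u$ is a $0$-node on $\pi$, so $w$ is non-adjacent to all of $X$; and (iii) for $v\in A\cup A'$ (whose $\ell$-value is that of a $1$-node) and $w\in B\cup B'$ (whose $\ell$-value is that of a $0$-node), $vw\in E(G)$ if and only if $\ell(v)<\ell(w)$.

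Next I would localise the cost. Both alternate clusterings only re-partition $C_1\cup C_2$, so they differ from $\mathcal C$ only on pairs inside $C_1\cup C_2$; writing $C_1=Y\sqcup A\sqcup B$ and $C_2=Y'\sqcup A'\sqcup B'$ and evaluating the inter-part edge and non-edge counts with (i)--(ii), a direct computation yields
\[
cost_G\big((\mathcal C\setminus\{C_1,C_2\})\cup\{C_1\cup A'\cup B',\,Y'\}\big)-cost_G(\mathcal C)=(|A'|-|B'|)(|Y'|-|Y|)+\Phi,
\]
\[
cost_G\big((\mathcal C\setminus\{C_1,C_2\})\cup\{C_1\cap X,\,C_2\cup A\cup B\}\big)-cost_G(\mathcal C)=(|A|-|B|)(|Y|-|Y'|)+\Phi,
\]
where $\Phi:=\overline{e}(C_1\setminus X,\,C_2\setminus X)-e(C_1\setminus X,\,C_2\setminus X)$ depends only on $G$. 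Relabelling $C_1\leftrightarrow C_2$ if necessary (which swaps the two candidate clusterings and preserves all hypotheses), assume $|Y|\ge|Y'|$; then $(|A'|-|B'|)(|Y'|-|Y|)\le 0$ since $|A'|\ge|B'|$, so it suffices to prove $\Phi\le(|A'|-|B'|)(|Y|-|Y'|)$, i.e.\ that the first alternate clustering does not cost more than $\mathcal C$.

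This inequality is the crux, and it is where optimality of $\mathcal C$ enters. Splitting $C_1$ into the two clusters $Y\cup A$ and $B$ cannot decrease cost, which after simplification reads $2e(A,B)\ge|B|(|A|+|Y|)$, and symmetrically $2e(A',B')\ge|B'|(|A'|+|Y'|)$. By (iii), for a $1$-type set $P_1\subseteq A\cup A'$ and a $0$-type set $P_0\subseteq B\cup B'$ the quantity $2e(P_1,P_0)-|P_1||P_0|$ equals $\sum_{v\in P_1}\sum_{w\in P_0}\operatorname{sgn}(\ell(w)-\ell(v))$, so these two bounds say that the levels of $A$ sit, on balance, above those of $B$, and those of $A'$ above those of $B'$. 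Expanding $\Phi$ via (i)--(ii) and using $e(B,B')\ge 0$, the target reduces to a lower bound on $e(A,B')+e(B,A')$, which I would get by a double counting over the $\ell$-levels: intuitively $A$ cannot be mostly below $B'$ while $A'$ is mostly below $B$, since together with the two optimality bounds this would force $B'$ to lie on balance above $B$ and simultaneously $B$ above $B'$; quantifying this ``no cyclic tower'' obstruction, using also $|A|\ge|B|$ and $|A'|\ge|B'|$, closes the proof. I expect this quantitative level-counting step to be the main obstacle.
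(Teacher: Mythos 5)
Your setup and localization are correct and closely track the paper's argument: you rightly observe that the two alternate clusterings differ from $\mathcal{C}$ only inside $C_1\cup C_2$, you compute the cost differences correctly (your $\Phi$ is exactly $\overline{e}(A\cup B,A'\cup B')-e(A\cup B,A'\cup B')$, and the paper shows $(|A'|-|B'|)(|Y|-|Y'|)-\Phi\geq 0$), and you correctly identify that the crux is to bound $\Phi$. However, the optimality constraints you propose to feed into the level-counting step are too weak to carry it.

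You extract only two ``aggregate'' inequalities, $2e(A,B)\geq|B|(|A|+|Y|)$ and $2e(A',B')\geq|B'|(|A'|+|Y'|)$, from the splits $C_1\to\{Y\cup A,\,B\}$ and $C_2\to\{Y'\cup A',\,B'\}$. These, together with $|A|\geq|B|$, $|A'|\geq|B'|$ and the level structure, do not imply $\Phi\leq 0$. Consider a cotree in which, top to bottom along the root-to-$u$ path, a $1$-node carries $A_1\subseteq C_1$ with $|A_1|=1$; a $0$-node carries $B_1\subseteq C_1$ with $|B_1|=2$; a $1$-node carries $A'$ with $|A'|=c$; a $0$-node carries $B'$ with $|B'|=c$; a $1$-node carries $A_2\subseteq C_1$ with $|A_2|=100$; a $0$-node carries $B_2\subseteq C_1$ with $|B_2|=50$; then $u$. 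So $A=A_1\cup A_2$, $B=B_1\cup B_2$, $|A|=101\geq 52=|B|$, and $|A'|=c\geq c=|B'|$. One checks $e(A,B)=1\cdot 2+1\cdot 50+100\cdot 50=5052$, so with $|Y|=1$ your first inequality reads $10104\geq 52\cdot 102=5304$, satisfied; the second holds whenever $|Y'|\leq c$. Yet $e(A,B')=c$, $e(A',B)=50c$, $e(B,B')=0$, which gives
\[
\Phi = |A||B'|+|B||A'|+|B||B'|-|A||A'|-2e(A,B')-2e(A',B)-2e(B,B')=2c>0.
\]
So your declared constraints are compatible with $\Phi>0$, and the ``no cyclic tower'' step cannot be closed from them alone. (Your intuition correctly captures the \emph{sign} pattern but not the \emph{magnitudes}; here $A$ sits above $B$ and $A'$ above $B'$ on balance, no cycle forms, and $\Phi$ is still positive.)

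What is missing is a \emph{family} of optimality inequalities, one per level on the root-to-$u$ path. The paper's Claim~\ref{claim:a1ib1i} splits $C_1$ at each level $i$ into a prefix $A[1,i]\cup B[1,i]$ and the rest, and deduces $|A[1,i]|\geq|B[1,i]|$ for every $i$ (and symmetrically for $C_2$). In the example above this is already violated at $i=1$ since $1<2$, which is how the paper certifies that such a configuration cannot be optimal. That prefix dominance is precisely what lets the paper's Claim~\ref{claim:edge-vs-nonedge} decompose $e(A\cup B,A'\cup B')-\overline{e}(A\cup B,A'\cup B')$ level by level and show each term is non-negative, using that $A[1,i]$ is a full neighbor of $A'_i\cup B'_i$. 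So the overall shape of your argument is right, but the single ``by type'' split of $C_1$ into $Y\cup A$ and $B$ must be replaced by the whole family of ``by level'' prefix splits before the level-counting finish you envision can work.
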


\begin{proof}
Let us denote $Y = C_1 \cap X$ and $Y' = C_2 \cap X$.  Note that if $A, B, A', B'$ are all empty, the lemma holds, so assume this is not the case.
Observe that for each $1$-node $u_i$ in $U$, the set of leaf children of $u_i$ are part of a critical clique, and must all be in the same cluster of $\mathcal{C}$ by Proposition~\ref{critical-clique-prop}.

Our main challenge is to establish that there are at least as many edges between $A \cup B$ and $A' \cup B'$ than non-edges, after which the statement will follow easily.
The difficulty in proving this claim is that even though $A, A'$ are neighbors, $A, B'$ and $A', B$ could be non-neighbors, possibly creating too many non-edges.  However, there cannot be too many such non-edges.  For instance, the children of the highest $1$-node in $U$ are neighbors of all of $B \cup B'$.  More generally, we can ``layer'' the $A \cup A'$ nodes in order of decreasing neighborhood size in $B \cup B'$, and argue that there are enough edges traversing.

To this end, we set up a series of definitions that are illustrated in Figure~\ref{fig:tpgaibi}.
Let $a_1$ be the lowest common ancestor of $C_1 \cup C_2$ in $T$, which is in $U$ since $C_1 \cup C_2$ intersects with $X$ and some other set $A, A', B, B'$ is non-empty.
If $a_1$ is a $0$-node, the choice of $a_1$ as the lowest common ancestor implies that at least one of $C_1$ or $C_2$ induces a disconnected subgraph, contradicting that $\mathcal{C}$ is an optimal clustering.
Thus $a_1$ is a $1$-node.
Suppose, without loss of generality, that the leaf children of $a_1$ are in $C_1$. 
Then, let $a'_1$ be the $1$-node in $U$ that is the closest descendant of $a_1$, such that the leaf children of $a'_1$ are in $C_2$ (if no such node exists, then $a'_1$ is undefined).  
Then to generalize, assume that $a'_i$ is defined for some $i \geq 1$.  Then let $a_{i+1}$ be the closest $1$-node in $U$ descending from $a'_i$ such that the leaf children of $a_{i+1}$ are all in $C_1$ (or leave $a_{i+1}$ undefined if this does not exist).  Similarly, if $a_i$ is defined, $a'_i$ is the closest descendant of $a_i$ in $U$ such that the leaf children of $a'_i$ are all in $C_2$ (or undefined if non-existent). 

\begin{figure}[ht]
\begin{center}
\includegraphics[width=12cm]{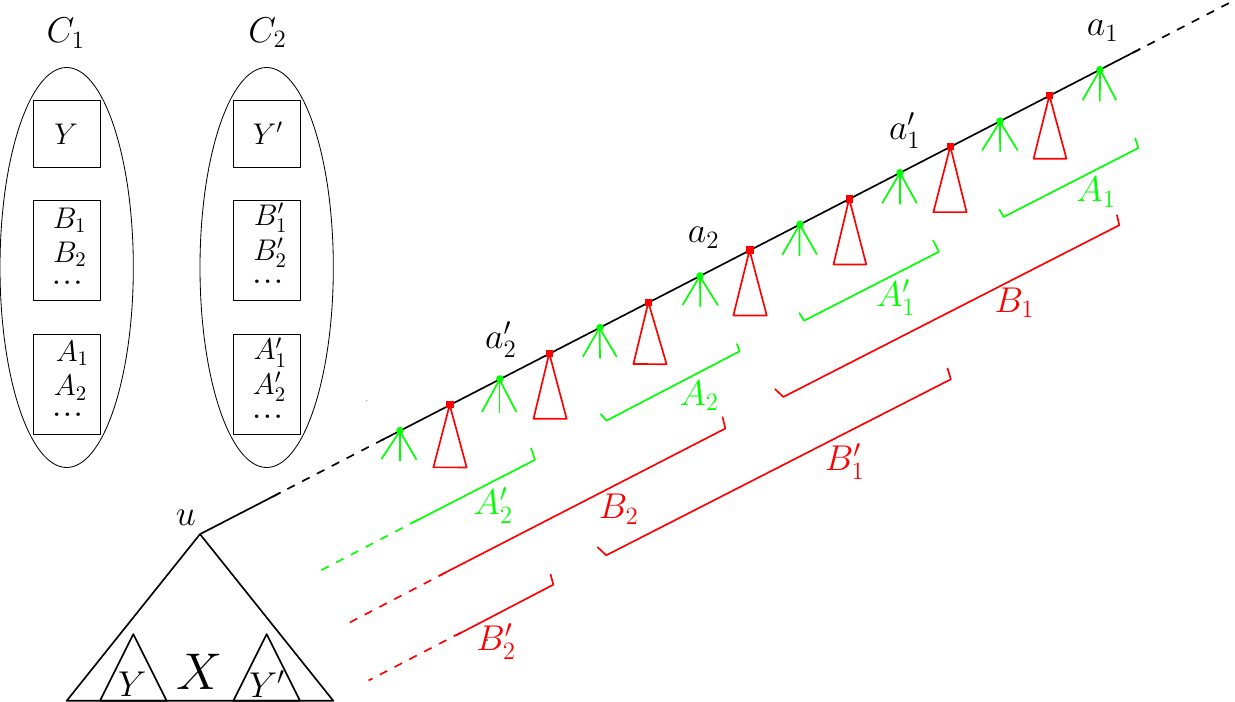}
\caption{Illustration of $a_i, a'_i, A_i, B_i, A'_i, B'_i$. }
\label{fig:tpgaibi}
\end{center}
\end{figure}

Let $r$ be the maximum index such that $a_r$ is defined and $s$ the maximum index such that $a'_s$ is defined.  Note that if $a_i'$ is defined, then $a_i$ is defined, and if $a_{i+1}$ is defined, then $a'_i$ is defined.  Thus $r \in \{s, s+1\}$.  
For $i \in [r]$, we put $next(a_i) = a_{i+1}$ if defined, and $next(a_i) = u$ otherwise.  
Then define 
\begin{align*}
A_i = A \cap (L(a_i) \setminus L(next(a_i))), \quad 
B_i = B \cap (L(a_i) \setminus L(next(a_i)).
\end{align*}
In other words, $A_i$ is the set of $A$ vertices that are leaves whose parent is on the path between $a_i$ and the next $a_{i+1}$ (excluding $a_{i+1}$).  Note that by definition there is no such leaf whose parent is on the path between $a'_i$ and $a_{i+1}$.  Then, $B_i$ are the $B$ vertices whose closest ancestor in $U$ is a $0$-node between $a_i$ and $a_{i+1}$ (note that here, this ancestor could be between $a'_i$ and $a_{i+1}$).

Similarly, for $i \in [s]$, put $next(a'_i) = a'_{i+1}$ if defined, and $next(a'_i) = u$ otherwise.  
Then define 
\begin{align*}
A'_i = A' \cap (L(a'_i) \setminus L(next(a'_i))), \quad 
B'_i = B' \cap (L(a'_i) \setminus L(next(a'_i)).
\end{align*}
For $i, j \in [r]$, let $A[i,j] = A_i \cup A_{i+1} \cup \ldots \cup A_j$ and $B[i,j] = B_i \cup B_{i+1} \cup \ldots B_j$.  
For $i,j \in [s]$, let $A'[i,j] = A'_i \cup \ldots A'_j$ and $B'[i,j] = B'_i \cup \ldots \cup B'_j$.
Note that $B'[1,s] = B'$ since all vertices of $B'$ are descendants of $a'_1$, otherwise, $C_2$ induces a disconnected subgraph, contradicting that $\mathcal{C}$ is an optimal clustering.

Armed with these definitions, we can make the following observations:
\begin{enumerate}
\item 
$A \cup A'$ is a clique.

\item 
$X$ is a neighbor of $A \cup A'$, and a non-neighbor of $B \cup B'$.

\item 
For distinct $i, j$, $B_i, B_j$ are non-neighbors, and $B_i', B_j'$ are non-neighbors.

\item 
$B_i, A[i+1,r]$ are non-neighbors, and $B_i, A'[1,i-1]$ are neighbors.  
Likewise, $B'_i, A'[i+1, s]$ are non-neighbors, and $B'_i, A[1,i]$ are neighbors.
\end{enumerate}

Let us now proceed with an important claim.

\begin{claim}\label{claim:a1ib1i}
For each $i \in [r]$, $|A[1,i]| \geq |B[1,i]|$.  Moreover, for each $i \in [s]$, $|A'[1, i]| \geq |B'[1,i]|$.
\end{claim}

\begin{proof}
We only prove for $i \in [r]$, the other statement can be proved in the same manner.
To ease notation, we denote $a = |A[1,i]|$ and $b = |B[1,i]|$.
Suppose that $a < b$.  Consider splitting $C_1$ into $A[1,i] \cup B[1,i]$ and $A[i+1,r] \cup B[i+1,r] \cup Y$, thereby producing a clustering with $C_1$ replaced by two sets.
The edge modifications paid by this new clustering but not by $\mathcal{C}$ is due to the edges to delete between $A[1,i] \cup B[1, i]$ and $A[i+1, r] \cup B[i+1, r] \cup Y$.
As we just observed, $B[1, i]$ is a non-neighbor of each of $A[i+1, r], B[i+1, r]$ and $X$ (and thus $Y$).  Thus, 
the increase in cost is at most $a (|B| - b) + a(|A| - a) + a|Y|$.  On the other hand, separating $B[1, i]$ from the other subsets saves a cost of at least $b (|B| - b) + b (|A| - a) + b |Y|$, which were insertions in $\C$.  Assuming $a < b$ means that the increase in cost is strictly smaller
than the decrease in cost, contradicting that $\mathcal{C}$ is an optimal clustering.
\end{proof}

We can next show that there are at least as many edges traversing $A \cup B$ and $A' \cup B'$ than non-edges.

\begin{claim}\label{claim:edge-vs-nonedge}
$e(A \cup B, A' \cup B') \geq \overline{e}(A \cup B, A' \cup B').$
\end{claim}

\begin{proof}
We first argue that
\begin{align*}
e(A \cup B, A' \cup B') = &~\sum_{i=1}^s e(A'_i \cup B'_i, A[1, i] \cup B[1,i])~+ \\
				    &~ \sum_{i=2}^r e(A_i \cup B_i, A'[1,i-1] \cup B'[1,i-1]).
\end{align*}
Indeed, consider an edge $pq$ of $E(A \cup B, A' \cup B')$ such that $p \in A_i \cup B_i$ and $q \in A'_j \cup B'_j$ for some $i, j$.  
If $i > j$, then $pq$ is counted once in the second summation when adding $e(A_i \cup B_i, A'[1,i-1] \cup B'[1,i-1])$.  Moreover, it is not counted in the first summation.  If $j \geq i$, then $pq$ is counted once in the first summation and not counted in the second.  
Using the same arguments, we see that 
\begin{align*}
\overline{e}(A \cup B, A' \cup B') = &~\sum_{i=1}^s \overline{e}(A'_i \cup B'_i, A[1, i] \cup B[1,i]) + \\
				    &~ \sum_{i=2}^r \overline{e}(A_i \cup B_i, A'[1,i-1] \cup B'[1,i-1]).
\end{align*}
We next argue that $e(A'_i \cup B'_i, A[1, i] \cup B[1,i]) \geq \overline{e}(A'_i \cup B'_i, A[1,i] \cup B[1,i])$ for each $i \in [s]$.
To see this, recall the $A'_i, A[1,i]$ are neighbors and $B'_i, A[1,i]$ are neighbors.  This means that the number of edges between $A'_i \cup B'_i$ and $A[1, i] \cup B[1, i]$ is at least $|A[1,i]| (|A'_i| + |B'_i|)$, and the number of non-edges is therefore at most $|B[1,i]| (|A'_i| + |B'_i|)$.  By Claim~\ref{claim:a1ib1i}, $A[1,i] \geq B[1,i]$, implying that there are at least as many edges as non-edges.
Using the same arguments, we can obtain $e(A_i \cup B_i, A'[1, i-1] \cup B'[1,i-1]) \geq \overline{e}(A_i \cup B_i, A'[1, i-1] \cup B'[1,i-1])$ 
(this follows from the fact that $A'[1,i-1]$ is a neighbor of both $A_i$ and $B_i$ and again Claim~\ref{claim:a1ib1i}).  

We conclude that each term in the $e(A \cup B, A' \cup B')$ summation is at least as large as the corresponding term in the $\overline{e}(A \cup B, A' \cup B')$, which proves our claim.
\end{proof}

Claim~\ref{claim:edge-vs-nonedge} allows us to finally prove the lemma.
First suppose that $|Y| \geq |Y'|$.
Consider the clustering obtained from $\mathcal{C}$ by replacing $C_1, C_2$ by two other subsets $Y \cup A \cup B \cup A' \cup B'$ and $Y'$.
Compared to $\mathcal{C}$, this clustering has an increase in cost of at most $\overline{e}(A \cup B, A' \cup B') + |Y||B'| + |Y'||A'|$ (recalling that $Y, A'$ are neighbors, and $Y', B'$ are non-neighbors).
On the other hand, there is a decrease in cost of $e(A \cup B, A' \cup B') + |Y||A'| + |Y'||B'|$.
The decrease versus increase difference in cost is
\begin{align*}
	&~e(A \cup B, A' \cup B') + |Y||A'| + |Y'||B'| - \overline{e}(A \cup B, A' \cup B') - |Y||B'| - |Y'||A'| \\
       =&~e(A \cup B, A' \cup B') - \overline{e}(A \cup B, A' \cup B') + (|A'| - |B'|)(|Y| - |Y'|).
\end{align*}
Since $e(A \cup B, A' \cup B') - \overline{e}(A \cup B, A' \cup B') \geq 0$ by Claim~\ref{claim:edge-vs-nonedge}, $|A'| \geq |B'|$ is assumed in the lemma statement, and $|Y| \geq |Y'|$ was just assumed, this difference is non-negative.  It follows that the alternate clustering has an equal or better cost.  
The case $|Y'| \geq |Y|$ can be handled in the exact same manner by replacing $C_1, C_2$ by $Y, Y' \cup A' \cup B' \cup A \cup B$.
\end{proof}

Let $G$ be a TPG with cotree $T$.
Let $\mathcal{C} = \{C_1, \ldots, C_l\}$ be a clustering of $G$.
Let $X$ be a clade of $T$.  For $C_i \in \mathcal{C}$, we say that $X$ \emph{grows in $C_i$} if $C_i \cap X \neq \emptyset$ and $C_i \setminus X \neq \emptyset$.  Note that this differs from the notion of overlapping, since $X \subset C_i$ is possible. 
We then say that $X$ has a \emph{single-growth in $\mathcal{C}$} if $X$ grows in at most one element of $\mathcal{C}$.  In other words, at most one cluster of $\mathcal{C}$ containing elements of $X$ also has elements outside of $X$, and the rest of $X$ is split into clusters that are subsets of $X$.
Furthermore, we will say that $X$ has a \emph{heritable single-growth in $\mathcal{C}$} if, for all clades $Y$ of $T$ such that $Y \subseteq X$, $Y$ has a single-growth in $\mathcal{C}$.  For $v \in V(T)$, we may also say that $v$ grows in $C_i$ if $L(v)$ grows in $C_i$, or that $v$ has a single-growth in $\mathcal{C}$ if $L(v)$ does.
For brevity, we may write SG for single-growth, and HSG for heritable single-growth.

\begin{lemma}
    Suppose that $G$ is a TPG with cotree $T$.
    Then there exists an optimal clustering $\mathcal{C}$ of $G$ such that every clade $X$ of $T$ has an SG in $\mathcal{C}$.
\end{lemma}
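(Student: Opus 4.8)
The plan is to work with an optimal clustering that is ``most compatible'' with the SG property, and to show that Lemma~\ref{lem:two-clusters} repairs any remaining violation. For a clustering $\mathcal{C}$, define the potential $\Phi(\mathcal{C}) = \sum_{v\in V(T)} |\{ C \in \mathcal{C} : L(v) \text{ grows in } C \}|$, a non-negative integer, and let $\mathcal{C}$ be an optimal clustering of $G$ that minimizes $\Phi(\mathcal{C})$. If every clade has an SG in $\mathcal{C}$ we are done, so assume there is $u \in V(T)$ with $X := L(u)$ growing in two distinct clusters $C_1, C_2 \in \mathcal{C}$; note $u$ must be internal, since $L(u) = \{u\}$ cannot grow in two clusters.

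The first step is to check that $\mathcal{C}, u, C_1, C_2$ satisfy the hypotheses of Lemma~\ref{lem:two-clusters}. Defining $U, A, B, A', B'$ as in that lemma, recall from the TPG cotree structure (as established inside the proof of Lemma~\ref{lem:two-clusters}) that every vertex of $A$ is adjacent to all of $X$, while every vertex of $B$ is non-adjacent to all of $X$. Hence splitting $C_1$ into the two clusters $C_1 \cap X$ and $C_1 \setminus X = A \cup B$ produces a clustering whose cost differs from $cost_G(\mathcal{C})$ by exactly $|C_1 \cap X|\,(|A| - |B|)$: we newly pay for deleting the $|C_1 \cap X|\cdot|A|$ edges between $C_1\cap X$ and $A$, and we save the $|C_1\cap X|\cdot|B|$ insertions between $C_1\cap X$ and $B$, with nothing else changed. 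Since $\mathcal{C}$ is optimal, this forces $|A| \geq |B|$, and symmetrically $|A'| \geq |B'|$. So Lemma~\ref{lem:two-clusters} applies and yields an optimal clustering $\mathcal{C}'$ obtained from $\mathcal{C}$ by replacing $\{C_1, C_2\}$ with $\{C_1 \cup (C_2\setminus X),\, C_2 \cap X\}$ or with $\{C_1 \cap X,\, C_2 \cup (C_1\setminus X)\}$.

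It then remains to show $\Phi(\mathcal{C}') < \Phi(\mathcal{C})$, contradicting the choice of $\mathcal{C}$. The two options for $\mathcal{C}'$ are symmetric under swapping $C_1$ and $C_2$, so consider the first, where $C_1, C_2$ are replaced by $C_1^+ := C_1 \cup (C_2 \setminus X)$ and $Y' := C_2 \cap X$. For each $v \in V(T)$ one compares the number of clusters that $L(v)$ grows in, before and after. For $v = u$ this count drops by exactly one, since $X$ still grows in $C_1^+$ but not in $Y' \subseteq X$, and the other clusters are untouched. For a proper descendant $v$ of $u$ (so $L(v) \subsetneq X$): $L(v)$ grows in $C_1^+$ iff it grows in $C_1$, and it grows in $Y'$ only if it grew in $C_2$, so the count does not increase. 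For $v$ with $L(v)$ disjoint from $X$: since $C_1, C_2$ both meet $X$, neither is contained in $L(v)$, so ``$L(v)$ grows in $C_1^+$'' becomes equivalent to ``$L(v)$ meets $C_1$ or $C_2$'', hence to ``$L(v)$ grows in $C_1$ or in $C_2$'', while $L(v)$ cannot grow in $Y' \subseteq X$; the count does not increase. For a strict ancestor $v$ of $u$ (so $X \subsetneq L(v)$): using $X \subseteq L(v)$ one again gets ``$L(v)$ grows in $C_1^+$'' equivalent to ``$L(v)$ grows in $C_1$ or in $C_2$'', and $L(v)$ never grows in $Y' \subseteq L(v)$; the count does not increase. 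Summing over all $v$ gives $\Phi(\mathcal{C}') \leq \Phi(\mathcal{C}) - 1$.

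I expect the last paragraph's case analysis to be the main obstacle. The recurring pitfall is the direction of ``$X$ grows in $C$'': it means the \emph{cluster} pokes out of the clade ($C \cap X \neq \emptyset \neq C \setminus X$), not the clade out of the cluster; with that convention, it is the ancestor-clade case where a careless recount could go up. The resolution has two ingredients: the small new cluster $C_2 \cap X$ is contained in every clade that contains $X$, so it can never contribute there; and for any clade $W$ with $X \subseteq W$ (or with $W \cap X = \emptyset$) one has $(C_2 \setminus X) \setminus W = C_2 \setminus W$ (resp. $(C_2\setminus X)\cap W = C_2 \cap W$), which collapses the relevant set differences and makes the enlarged cluster $C_1^+$ ``grow'' in $W$ exactly when $C_1$ or $C_2$ already did. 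Once these boolean identities are pinned down, each non-increase claim reduces to a one-line check.
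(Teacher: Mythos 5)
Your proof is correct and takes a genuinely cleaner route than the paper's. Where the paper maximizes the number of clades with the auxiliary ``heritable single-growth'' (HSG) property, selects a \emph{minimal} node $u$ lacking SG, applies Lemma~\ref{lem:two-clusters}, and then iterates (``repeat as many times as needed until $X$ grows in a single cluster'') while verifying that no previously HSG-clade loses its HSG status, you instead minimize a single global potential $\Phi(\mathcal{C}) = \sum_{v\in V(T)} |\{C\in\mathcal{C} : L(v)\text{ grows in }C\}|$ over all optimal clusterings and obtain an immediate contradiction after one application of Lemma~\ref{lem:two-clusters}. Your case analysis for the non-increase of $\Phi$ at each $v$ is complete: the four cases ($v=u$, $v$ a proper descendant of $u$, $L(v)$ disjoint from $X$, $v$ a strict ancestor of $u$) exhaust the laminarity of clades, and the boolean identities you invoke --- namely $(C_2\setminus X)\setminus W = C_2\setminus W$ when $X\subseteq W$, and $(C_2\setminus X)\cap W = C_2\cap W$ when $W\cap X = \emptyset$ --- are exactly what makes each case collapse. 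Your derivation of $|A|\geq |B|$ and $|A'|\geq |B'|$ via the split-$C_1$ cost computation is the same as the paper's. What the potential argument buys is the elimination of both the HSG bookkeeping and the implicit induction hidden in the paper's ``repeat'' step, yielding a shorter, more transparent proof; the paper's version, by contrast, makes the intermediate structural fact (every clade below $u$ keeps its SG property after the swap) explicit, which some readers may find illuminating even if it is not strictly necessary.
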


\begin{proof}
    Consider an optimal clustering $\mathcal{C} = \{C_1, \ldots, C_l\}$ of $G$, chosen such that the number of clades of $T$ that have an HSG in $\mathcal{C}$ is maximum, among all optimal clusterings\footnote{We could attempt to choose $\mathcal{C}$ to maximize the clades with an SG instead, but we will modify $\mathcal{C}$ later on, and keeping track of changes in HSG clades is much easier than SGs.}.  
    If every clade of $T$ has the HSG property, then we are done (since HSG implies SG), so we assume that not every clade has the HSG property. Clearly, in $\mathcal{C}$, every leaf of $T$ has an HSG, the root of $T$ does not have an HSG, and there exists at least one internal node of $T$ that does not have an SG. Choose $u \in V(T)$ with the minimum number of descendants such that $u$ does not have an SG in $\mathcal{C}$. Then, every descendant of $u$ has an SG, thus, also has an HSG in $\mathcal{C}$.
    Notice that $u$ cannot be the root of $T$, because the root trivially has an SG in $\mathcal{C}$.

    Denote $X = L(u)$.  Since $u$ does not have the SG property, $X$ grows in at least two clusters of $\mathcal{C}$, say $C_1$ and $C_2$.
    Thus $X \cap C_1, C_1 \setminus X, X \cap C_2, C_2 \setminus X$ are all non-empty.
    Denote $Y = X \cap C_1, Y' = X \cap C_2$.  
    We show that we can transform $\mathcal{C}$ into another optimal clustering $\mathcal{C}'$ in which one of $Y$ or $Y'$ is a cluster by itself, and such that the number of clades that have an HSG in $\mathcal{C}'$ is no less than in $\mathcal{C}$.

    As in Lemma~\ref{lem:two-clusters}, let $U$ be the set of strict ancestors of $u$ in $T$. Let $A$ and $A'$ be the sets of vertices of $C_1$ and $C_2$, respectively, that are children of $1$-nodes in $U$.  Then let $B = C_1 \setminus (X \cup A)$ and $B' = C_2 \setminus (X \cup A')$.  Note that because $C_1 \setminus X \neq \emptyset$, $A \cup B$ is non-empty.  Likewise, $A' \cup B'$ is non-empty.

   We argue that $|A| \geq |B|$.  Suppose instead that $|A| < |B|$.
    Consider the clustering  $\mathcal{C}' = (\mathcal{C} \setminus \{C_1\}) \cup \{Y, A \cup B\}$.  
    Then $\mathcal{C}'$ has $|A||Y|$ edge deletions that are not in $\mathcal{C}$ (and no other edge modification is in $\mathcal{C}'$ but not in $\mathcal{C}$, since $Y$ and $B$ share no edge).  On the other hand, $\mathcal{C}$ has $|B||Y| > |A||Y|$ edge additions that are not in $\mathcal{C}'$.  Hence, the cost of $\mathcal{C}'$ is strictly lower than that of $\mathcal{C}$, a contradiction.
   By the same argument, we get that $|A'| \geq |B'|$.

   We see that $C_1$ and $C_2$ satisfy all the requirements of Lemma~\ref{lem:two-clusters}, and so we may get an alternate optimal clustering $\mathcal{C}'$ or $\mathcal{C}''$, where 
$\mathcal{C}'$ is obtained from $\mathcal{C}$ by replacing $C_1, C_2$ by $Y, C_2 \cup A \cup B$, and $\mathcal{C}''$ is obtained by replacing $C_1, C_2$ by $Y', C_1 \cup A' \cup B'$.  Notice that $X$ grows in fewer clusters in $\mathcal{C}'$ than in $\mathcal{C}$, and the same is true for $\mathcal{C}''$.  Before proceeding, we need to argue that $T$ has as many clades with an HSG in $\mathcal{C}'$, and in $\mathcal{C}''$ than in $\mathcal{C}$.

So, let $w \in V(T)$ be such that $w$ has an HSG in $\mathcal{C}$.  
Observe that $w$ cannot be in $U \cup \{u\}$, since these have $u$ as a descendant and $u$ does not have an SG in $\mathcal{C}$.  
Therefore, $w$ must either be: (1) a leaf child of a $1$-node in $U$; (2) a descendant of $u$; or (3) a node whose first ancestor in $U$ is a $0$-node.
Let $v$ be a descendant of $w$, with $v = w$ possible.  By the definition of HSG, $v$ has an SG in $\mathcal{C}$.  In all cases, we argue that $v$ still has an SG in $\mathcal{C}'$ and $\mathcal{C}''$.
\begin{enumerate}
    \item 
    If $w$ is the child of a $1$-node of $U$, then $w = v$ is a leaf and it trivially has an SG in $\mathcal{C}'$ and $\mathcal{C}''$.

    \item 
    Suppose that $w$, and thus $v$, is a descendant of $u$.  
    If $L(v)$ does not intersect with $C_1$ nor $C_2$, then the clusters of $\mathcal{C}$ that intersect with $L(v)$ are unaltered in $\mathcal{C}'$ and $\mathcal{C}''$, and thus $v$ also has an SG in $\mathcal{C}'$, and in $\mathcal{C}''$.  Thus suppose that $L(v)$ intersects with $C_1 \cup C_2$.
    
    In that case, since $v$ descends from $u$, $L(v) \subseteq X$ and it must thus intersect with $Y \cup Y'$.  
    If $L(v) \cap Y \neq \emptyset$, then $v$ grows in $C_1$ because $A \cup B \neq \emptyset$.  Likewise, if $L(v) \cap Y' \neq \emptyset$, then $v$ grows in $C_2$.  It follows that $L(v)$ intersects exactly one of $Y$ or $Y'$, and grows in exactly one of $C_1$ or $C_2$.  
    If $L(v)$ intersects $Y$, then in $\mathcal{C}'$, $L(v)$ may grow in the cluster $Y$, but it does not grow in $C_2 \cup A \cup B$ since it does intersect with it, and does not grow in other clusters of $\mathcal{C}'$ since these were unaltered.  Thus $L(v)$ has an SG in $\mathcal{C}'$.  In $\mathcal{C}''$, $L(v)$ grows in $C_1 \cup A' \cup B'$ but no other cluster for the same reason.  Thus $L(v)$ has an SG in $\mathcal{C}''$ as well.  
If $L(v)$ intersects $Y'$, the same arguments can be used to deduce that $L(v)$ has an SG in $\mathcal{C}'$ and $\mathcal{C}''$.

    \item 
    Finally, suppose that $w$ is such that its first ancestor in $U$ is a $0$-node.  Again we may assume that $L(v)$ intersects $C_1 \cup C_2$.  
    This implies that $L(v)$ intersects with $B \cup B'$, and does not intersect with $Y \cup Y' \cup A \cup A'$.  If $L(v)$ intersects $B$, then it grows in $C_1$ since $|A| \geq |B|$ and thus $A$.  Then in $\mathcal{C}'$, $v$ grows in $C_2 \cup A \cup B$, but not $Y$ nor any other unaltered cluster.
    In $\mathcal{C}''$ it grows in $C_1 \cup A' \cup B'$, but not $Y'$ nor any other unaltered cluster.  If $L(v)$ intersects $B'$, the same argument applies.  Either way, $L(v)$ has an SG in $\mathcal{C}'$ and $\mathcal{C}''$.

\end{enumerate}

  Since any descendant of $w$ has an SG in either $\mathcal{C'}$ and $\mathcal{C}''$, we deduce that $w$ has an HSG in both alternate clusterings.
 
We have thus found an optimal clustering $\mathcal{C}^* \in \{\mathcal{C}', \mathcal{C}''\}$ such that every $w \in V(T)$ that has an HSG in $\mathcal{C}$ also has an HSG in $\mathcal{C}
^*$.
Moreover, since $\mathcal{C}^*$ ``extracts'' either $Y$ or $Y'$ from its cluster, $X$ grows in one less cluster of $\mathcal{C}^*$.  If $X$ grows in only one such cluster, then $u$ has an SG in $\mathcal{C}^*$ and therefore also an HSG in $\mathcal{C}^*$, by the choice of $u$.  
In this case, $T$ has more clades that have an HSG in $\mathcal{C}^*$ than with $\mathcal{C}$, which contradicts our choice of $\mathcal{C}$.  If $X$ still grows in at least two clusters of $\mathcal{C}^*$, we may repeat the above modification as many times as needed until $X$ grows in a single cluster, yielding the same contradiction.  We deduce that every node of $T$ has an HSG, and therefore an SG in $\mathcal{C}$.
\end{proof}

Our goal is to use the above to perform dynamic programming over the cotree.  For a node $v$ of this cotree, we will store the value of a solution for the subgraph induced by $L(v)$, and will need to determine which cluster of such a partial solution should grow.  As it turns out, we should choose the largest cluster to grow.

\begin{lemma}
\label{SG-cluster-is-the-largest}
Let $G$ be a TPG with cotree $T$.  Let $\mathcal{C}$ be an optimal clustering of $G$ such that every clade of $T$ has an SG in $\mathcal{C}$ and, among all such possible optimal clusterings, such that $|\mathcal{C}|$ is maximum.  
Then for every clade $X$ of $T$, one of the following holds:
\begin{itemize}
\item 
$X$ does not grow in any $C_i \in \mathcal{C}$; or

\item 
$X$ grows in one $C_i \in \mathcal{C}$, and $|X \cap C_i| = \max_{C_j \in \mathcal{C}} |X \cap C_j|$.
\end{itemize}
\end{lemma}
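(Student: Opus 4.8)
The plan is a proof by contradiction. Suppose some clade $X = L(u)$ violates the conclusion. Since $\mathcal C$ has an SG at $X$, the only possible violation is that $X$ grows in exactly one cluster $C_i$ while some other cluster $C_j$ has $|X\cap C_j|>|X\cap C_i|$; as $C_j$ still meets $X$ and $X$ already uses up its single growth on $C_i$, the cluster $C_j$ does not grow in $X$, so $C_j\subseteq X$ and $|C_j|=|X\cap C_j|>|X\cap C_i|\ge 1$. Write $Y=X\cap C_i$ and $Z=C_i\setminus X$; both are nonempty because $X$ grows in $C_i$. The one structural fact I need is the standard cograph property that every $v\in V(G)\setminus X$ is adjacent to all of $X$ or to none of it (for $v\notin L(u)$ the lowest common ancestor of $v$ with a leaf of $L(u)$ is independent of the leaf, being the lca of $v$ and $u$). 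Accordingly split $Z=A\cup B$, with $A$ the vertices of $Z$ adjacent to all of $X$ and $B$ the vertices of $Z$ adjacent to none; then $Y$, and likewise $C_j$ since $C_j\subseteq X$, is a neighbour of $A$ and a non-neighbour of $B$.

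I would then compare $\mathcal C$ with the two modified clusterings $\mathcal C' = (\mathcal C\setminus\{C_i,C_j\})\cup\{Y,\,C_j\cup Z\}$ and $\mathcal C''=(\mathcal C\setminus\{C_i,C_j\})\cup\{Y,\,Z,\,C_j\}$. (Lemma~\ref{lem:two-clusters} does not apply usefully here: with $C_j\subseteq X$ the sets it calls $A',B'$ are empty and its conclusion collapses to the trivial inequality $cost_G(\mathcal C)\le cost_G(\mathcal C)$.) Only a few vertex pairs change their ``same cluster versus different cluster'' status under these moves, so the cost differences follow at once from the neighbour/non-neighbour relations above:
\[
cost_G(\mathcal C') = cost_G(\mathcal C) + (|A|-|B|)(|Y|-|C_j|),\qquad
cost_G(\mathcal C'') = cost_G(\mathcal C) + |Y|\,(|A|-|B|).
\]
By optimality of $\mathcal C$ both increments are nonnegative. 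We assumed $|Y|<|C_j|$, so $|Y|-|C_j|<0$ while $|Y|>0$; thus $|A|>|B|$ would make the first increment negative and $|A|<|B|$ would make the second one negative, forcing $|A|=|B|$.

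This equality is the crux: it gives $cost_G(\mathcal C'')=cost_G(\mathcal C)$, so $\mathcal C''$ is again an optimal clustering, and $\mathcal C''$ has exactly one more cluster than $\mathcal C$, since $Y,Z,C_j$ are three pairwise disjoint nonempty sets replacing the two clusters $C_i,C_j$. To conclude, $\mathcal C''$ itself need not have an SG at every clade, so I would apply the cluster rearrangements from the proof of the preceding lemma (the one producing an optimal clustering with an SG at every clade) to $\mathcal C''$; each such rearrangement swaps two clusters for two clusters, hence never changes the number of clusters, and the process terminates at an optimal clustering $\widetilde{\mathcal C}$ with an SG at every clade and $|\widetilde{\mathcal C}|=|\mathcal C''|=|\mathcal C|+1$. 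This contradicts the choice of $\mathcal C$ as having maximum cardinality among optimal clusterings with an SG at every clade.

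The points needing the most care are: (i) verifying that exactly the pairs between $Y$ and $Z$ and between $Z$ and $C_j$ change status, so that the two displayed formulas hold with the stated signs; and (ii) the final step — one cannot simply exhibit $\mathcal C''$, because it may fail the single-growth property, so one must re-establish that property while keeping the now-larger number of clusters, which is exactly what the preceding lemma's proof does as a cardinality-preserving transformation.
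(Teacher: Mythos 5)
Your proposal is correct, and it reaches the same contradiction through the same two rearrangements that the paper uses (the $1$-to-$2$ split of $C_i$, which you call $\mathcal C''$, and the $2$-to-$2$ swap, which you call $\mathcal C'$), but with the logic reordered. The paper first establishes $|A|>|B|$ \emph{unconditionally} for any clade growing in a cluster, by ruling out $|A|<|B|$ via the split's cost and ruling out $|A|=|B|$ via the split's cluster count; it then brings in a violating $C_j$ and uses the swap's cost difference $(|C_j|-|Y|)(|A|-|B|)>0$ to contradict optimality. You instead assume a violating $C_j$ from the outset, use the swap's cost identity (with $|Y|<|C_j|$) to get $|A|\le|B|$ and the split's cost identity to get $|A|\ge|B|$, hence $|A|=|B|$, and then use the split to contradict the maximality of $|\mathcal C|$. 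Both the cost identities you state and the derivation $|A|=|B|$ are correct.

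The one place where your write-up is genuinely weaker than the paper's is the concluding step. The paper, in the case $|A|=|B|$, \emph{directly} verifies by a case analysis on where a clade $v$ sits in the cotree relative to $u$, $Y$, and $A\cup B$ that every clade still has an SG in the split clustering $\mathcal C''$, which immediately contradicts the choice of $\mathcal C$. You instead propose to ``apply the cluster rearrangements from the proof of the preceding lemma'' to $\mathcal C''$. As written, that lemma's proof is not an iterative procedure one applies to a given clustering; it is a maximality argument (choose, among \emph{all} optimal clusterings, one maximizing the HSG clade count, then derive a contradiction). To make your deferral rigorous you must re-run that maximality argument restricted to optimal clusterings with at least $|\mathcal C''|$ clusters (a nonempty set), observing that the Lemma~\ref{lem:two-clusters}-based rearrangements it invokes replace two clusters by two nonempty clusters and hence stay inside this restricted set. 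This is sound, and you do flag it as the delicate point, but it imports machinery that the paper sidesteps: the explicit SG-preservation check for $\mathcal C''$ is about the same amount of work and keeps the proof self-contained, so I would encourage you to write that case analysis out rather than lean on the earlier lemma.
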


\begin{proof}
Let $X$ be a clade of $T$ and suppose that $X$ grows in some $C_i \in \mathcal{C}$.  Note that vertices of $X$ share the same neighborhood outside of $X$.  
Thus $C_i$ can be partitioned into $\{X \cap C_i, A, B\}$ such that 
$X, A$ are neighbors and $X, B$ are non-neighbors.  Note that $|A| \geq |B|$ as otherwise, we could obtain an alternate clustering $\mathcal{C}'$ by replacing $C_i$ by $\{X \cap C_i, A \cup B\}$ and save a cost of $|X \cap C_i|(|B| - |A|) > 0$.

We also argue that $|A| > |B|$.  This is because if $|A| = |B|$, the same clustering $\mathcal{C}'$ has $cost_G(\mathcal{C}') = cost_G(\mathcal{C})$, but has one more cluster. 
We also argue that every clade of $T$ has an SG in $\mathcal{C}'$, contradicting our choice of $\mathcal{C}$. Let $Y = X \cap C_i$  and consider some $v\in V(T)$.  By assumption $v$ has an SG in $\mathcal{C}$.  If $C_i \cap L(v) = \emptyset$, then the clusters of $\mathcal{C}$ that intersect with $L(v)$ are unaltered in $\mathcal{C}'$ and $v$ also has an SG $\mathcal{C}'$.  Suppose that $C_i \subseteq L(v)$.  Apart from $C_i = Y\cup A\cup B$, the clusters of $\mathcal{C}$ that intersect with $L(v)$ are unaltered in $\mathcal{C}'$. Moreover, $L(v)$ does not grow in $Y \cup A \cup B$, and neither does it grow in $Y$ or $A\cup B$.  Thus $v$ also has an SG $\mathcal{C}'$.  The remaining case is when $L(v)$ grows in $C_i$ (but no other cluster of $\mathcal{C}$).  Let $u \in V(T)$ be such that $L(u) = X$.  If $v = u$, then $L(v)$ does not grow in any cluster of $\mathcal{C}'$. If $v$ is a strict descendant of $u$, then $L(v)$ can only grow in $Y$ of $\mathcal{C}'$. If $v$ is a strict ancestor of $u$, then $Y \subseteq L(v)$ and $L(v)$ can only grow in $A\cup B$ of $\mathcal{C}'$. If $v$ is in the rest of $V(T)$, then $Y \cap L(v) = \emptyset$ and $L(v)$ grows only in $A\cup B$ of $\mathcal{C}'$. As a result, $L(v)$ has an SG in $\mathcal{C}'$.
Since this holds for any $v$, every node has an SG in $\mathcal{C'}$, which is a contradiction since $|\mathcal{C}| < |\mathcal{C}'|$.  Therefore, $|A| > |B|$.

Now suppose that there is some $C_j \neq C_i$ such that $|X \cap C_j| > |X \cap C_i|$.  Because $X$ already grows in $C_i$, the SG property implies that $C_j \subseteq X$.  Consider the alternate clustering $\C^*$ obtained from $\C$ by replacing $C_i, C_j$ by $C_j \cup A \cup B, X \cap C_i$.  The number of modifications in $\C^*$ but not in $\C$ is $|X \cap C_i||A| + |C_j||B|$, but the number of modifications in $\C$ not in $\C^*$ is $|X \cap C_i||B| + |C_j||A|$.  The difference between the latter and the former is $(|C_j| - |X \cap C_i|)(|A| - |B|)$.  Since $|A| > |B|$ and $|C_j| = |X \cap C_j| > |X \cap C_i|$, this is greater than $0$, and thus $\C^*$ is a clustering of cost lower than $\mathcal{C}$, a contradiction.
\end{proof}

Our algorithm will search for an optimal clustering that satisfies all the requirements of Lemma~\ref{SG-cluster-is-the-largest}.  That is, for a TPG $G$ with cotree $T$, a clustering $\mathcal{C}$ is \emph{well-behaved} if it is optimal, every clade of $T$ has an SG in $\mathcal{C}$, and among all such possible clusterings $|\mathcal{C}|$ is maximum.  As we know that such a $\mathcal{C}$ exists, we will search for one using dynamic programming.

In the remainder, for an arbitrary clustering $\mathcal{C}$ of $G$ and $X \subseteq V(G)$, define $\mathcal{C}|_X = \{ C_i \cap X | C_i \in \mathcal{C}\} \setminus \{ \emptyset \}$, i.e., the restriction of $\mathcal{C}$ to $X$.
Note that $\mathcal{C}|_X$ is a clustering of $G[X]$, and we refer to $cost_{G[X]}(\mathcal{C}|_X)$ as the \emph{cost of $\mathcal{C}$ in $G[X]$}.
Although $\mathcal{C}|_X$ is not necessarily an optimal clustering of $G[X]$, we can deduce from the above that it has minimum cost among those clusterings with the same largest cluster.

\begin{corollary}
\label{cor:largest-optimal}
    Let $G$ be a TPG with cotree $T$, and let $\mathcal{C}$ be a well-behaved clustering of $G$.  Let $u \in V(T)$ with clade $X = L(u)$, and let $k_u^* = \max_{C_j \in \mathcal{C}} |X \cap C_j|$ denote the size of the largest intersection of $\mathcal{C}$ with $X$. 
    
    Then $\mathcal{C}|_X$ is a clustering of $G[X]$ such that $cost_{G[X]}(\mathcal{C}|_X)$ is minimum, among all clusterings of $G[X]$ whose largest cluster has size $k_u^*$.
\end{corollary}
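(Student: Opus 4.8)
The plan is to argue by contradiction, pitting the hypothetical existence of a cheaper constrained clustering of $G[X]$ against the optimality of $\mathcal{C}$. First I would check that $\mathcal{C}|_X$ itself belongs to the family in question, i.e.\ that its largest cluster has size exactly $k_u^*$: if $X$ grows in some $C_i\in\mathcal{C}$, then $|X\cap C_i|=k_u^*$ by Lemma~\ref{SG-cluster-is-the-largest}, while every other member of $\mathcal{C}|_X$ is a cluster $C_j\subseteq X$ with $|C_j|=|X\cap C_j|\le k_u^*$; and if $X$ grows in no cluster, then every member of $\mathcal{C}|_X$ is some $C_j\subseteq X$ and the largest has size $k_u^*$ by definition of $k_u^*$. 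So if the corollary failed, there would be a clustering $\mathcal{D}$ of $G[X]$ whose largest cluster has size $k_u^*$ with $cost_{G[X]}(\mathcal{D})<cost_{G[X]}(\mathcal{C}|_X)$, and the goal becomes to graft $\mathcal{D}$ into $\mathcal{C}$ to produce a clustering of $G$ of strictly smaller cost.

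For the grafting I would use the single-growth property of $X$ in $\mathcal{C}$: at most one cluster of $\mathcal{C}$, call it $C_i$ when it exists, has both $C_i\cap X\ne\emptyset$ and $C_i\setminus X\ne\emptyset$, and every other cluster meeting $X$ is a subset of $X$. I would also invoke the standard cograph fact that, since $X=L(u)$ is a clade, every vertex outside $X$ is adjacent either to all of $X$ or to none of it; let $N$ be the set of outside vertices adjacent to all of $X$, so $E_G(X,V(G)\setminus X)=X\times N$. If $X$ grows in no cluster of $\mathcal{C}$, I set $\mathcal{C}'$ to be the clusters of $\mathcal{C}$ disjoint from $X$ together with all clusters of $\mathcal{D}$. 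If $X$ grows in $C_i$, I pick a largest cluster $D^*\in\mathcal{D}$, so $|D^*|=k_u^*=|X\cap C_i|$, and set $\mathcal{C}'$ to be the clusters of $\mathcal{C}$ disjoint from $X$, together with $\mathcal{D}\setminus\{D^*\}$, together with the single cluster $D^*\cup(C_i\setminus X)$; that is, the outside part $C_i\setminus X$ of the straddling cluster is reattached to a largest cluster of $\mathcal{D}$. In both cases one checks directly that $\mathcal{C}'$ is a partition of $V(G)$.

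Next I would compare $cost_G(\mathcal{C})$ with $cost_G(\mathcal{C}')$ by splitting each into three contributions: the cost inside $X$, the cost inside $V(G)\setminus X$, and the ``cross cost'', namely the number of edges of $X\times N$ with endpoints in distinct clusters plus the number of non-edges between $X$ and $V(G)\setminus X$ lying in a common cluster. By construction $\mathcal{C}$ and $\mathcal{C}'$ induce the same partition of $V(G)\setminus X$ (the only outside part that moved, $C_i\setminus X$, stayed whole), so the second contribution is unchanged. For the cross cost, only the unique straddling cluster can contribute; since $D^*$ and $X\cap C_i$ have the same size, are both completely joined to $N\cap C_i$, and have no edges to $(C_i\setminus X)\setminus N$, replacing $X\cap C_i$ by $D^*$ changes neither the count of such cut-edges nor of such cut-non-edges, so the cross cost is also unchanged (in the no-growth case the cross cost is simply $|X||N|$ on both sides). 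Hence $cost_G(\mathcal{C}')-cost_G(\mathcal{C})=cost_{G[X]}(\mathcal{D})-cost_{G[X]}(\mathcal{C}|_X)<0$, contradicting the optimality of $\mathcal{C}$.

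I expect the only delicate point to be this cross-cost bookkeeping: one has to verify carefully that swapping $X\cap C_i$ for a same-size set $D^*$ inside the straddling cluster disturbs neither the cut edges nor the cut non-edges that lie within a common cluster. This is exactly where the clade structure matters --- all of $X$ shares the single outside-neighborhood $N$, so the interaction of any subset of $X$ with $C_i\setminus X$ depends only on its cardinality --- and it is also where Lemma~\ref{SG-cluster-is-the-largest} is indispensable, since it is what forces $|X\cap C_i|=k_u^*$ and hence $|D^*|=|X\cap C_i|$, making the swap cost-neutral everywhere outside of $G[X]$.
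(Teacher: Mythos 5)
Your proof is correct and follows essentially the same route as the paper's: in both the no-growth case and the straddling case you graft the cheaper clustering $\mathcal{D}$ of $G[X]$ into $\mathcal{C}$, reattaching the outside part $C_i\setminus X$ to a size-$k_u^*$ cluster of $\mathcal{D}$, and then observe that since every vertex of $X$ has the same neighborhood outside $X$ (because $X$ is a clade), only the cost inside $G[X]$ changes. The paper states the cost-invariance more tersely; your explicit decomposition into inside/outside/cross terms, and the verification that the cross cost is preserved because $|D^*|=|X\cap C_i|$, fills in exactly the bookkeeping the paper leaves implicit.
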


\begin{proof}
Let $\C'$ be a clustering of $G[X]$ whose largest cluster has size $k^*_u$, and assume that $cost_{G[X]}(\C') < cost_{G[X]}(\C|_X)$. 
Suppose first that $X$ does not grow in any cluster of $\C$.  Then there is $\{C_1, \ldots, C_l\} \subseteq \C$ that forms a partition of $X$.  Because all the elements of $X$ have the same neighbors outside of $X$, we can replace $\{C_1, \ldots, C_l\}$ by $\C'$ in $\C$, which incurs fewer edit operations between elements of $X$, but does not alter edit operations between other pairs of vertices.  This contradicts the optimality of $\C$.  

Otherwise by Lemma~\ref{SG-cluster-is-the-largest}, $X$ grows in one cluster $C_i \in \C$ that maximizes $|C_i \cap X|$, which is equal to $k^*_u$.  Moreover, there are clusters $C_1, \ldots, C_l$ such that 
$\{C_1, \ldots, C_l\} \cup \{C_i \cap X\}$ forms a partition of $X$.  
Let $C_i'$ be a cluster of $\C'$ of size $k^*_u$.  In $\C$, we can replace $\{C_1, \ldots, C_l\}$ by $\C' \setminus \{C_i'\}$, and replace $C_i$ by $(C_i \setminus X) \cup C_i'$.  Again because vertices of $X$ have the same neighbors outside of $X$, this incurs less cost between pairs in $X$, and the same cost between other pairs of vertices, contradicting the optimality of $\C$.
\end{proof}

We define a 2-dimensional dynamic programming table $D$ indexed by $V(T) \times [n]$, with the intent that $D[u, k]$ has the cost of an optimal clustering of $G[L(u)]$ in which the largest cluster has size $k$.  Notice that this intent is mostly for intuition purposes, since we will not be able to guarantee that $D[u, k]$ stores the correct value for each $u, k$.  However, we will 
argue that those entries that correspond to an optimal clustering are correct.  

We assume that the cotree $T$ of $G$ is binary (note that such a cotree always exists and, since the previous lemmas make no assumption on the structure of the cotree, this is without loss of generality).
If $u$ is a leaf, put $D[u, 1] = 0$ and $D[u, k] = \infty$ for every $k \neq 1$.
For internal node $u$, let $u_1, u_2$ be the two children of $u$ in $T$.  We put
\begin{align*}
D[u, k] = \min \begin{cases}
	D[u_1, k] + \min_{j \in [k]} D[u_2, j] + \mathbb{I}_u \cdot |L(u_1)||L(u_2)| \\
	D[u_2, k] + \min_{j \in [k]} D[u_1, j] + \mathbb{I}_u \cdot |L(u_1)||L(u_2)|\\
	\min_{j \in [k-1]} (D[u_1, j] + D[u_2, k - j] + \alpha(u)),
\end{cases}
\end{align*}
where $\mathbb{I}_u = 0$ if $u$ is a $0$-node and $\mathbb{I}_u = 1$ if it is a $1$-node, 
and
\begin{align*}
\alpha(u) = \min \begin{cases}
	|L(u_1)||L(u_2)| - j(k - j) & \mbox{if $u$ is a $1$-node} \\
	j(k - j)  & \mbox{otherwise.}
\end{cases}
\end{align*}

The recurrence for $D[u, k]$ mainly says that there are three ways to obtain a solution with a largest cluster of size $k$: either that cluster is taken directly from the solution at $u_1$, from the solution at $u_2$, or we take a cluster of size $j$ from $u_1$ and size $k - j$ from $u_2$, and merge them together.

\begin{lemma}
\label{cotree-and-Dtable}
Let $\mathcal{C}$ be a well-behaved clustering of $G$.
Let $u \in V(T)$, and denote by $k^*_u$ the size of the largest cluster of $\mathcal{C}|_{L(u)}$.  Then both of the following hold:
\begin{itemize}
\item 
for each $k$ such that $D[u, k] \neq \infty$, 
there exists a clustering of $G[L(u)]$ of cost at most $D[u, k]$ whose largest cluster has size $k$;

\item 
$D[u, k^*_u]$ is equal to $cost_{G[L(u)]}(\C|_{L(u)})$, the cost of $\mathcal{C}$ restricted to $G[L(u)]$.
\end{itemize}
\end{lemma}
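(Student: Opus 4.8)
The plan is to prove both items simultaneously by induction on the cotree $T$, from the leaves upward (equivalently, by induction on $|L(u)|$). When $u$ is a leaf, $G[L(u)]$ is a single vertex, so $D[u,1]=0$, $k^*_u=1$, and both items hold trivially. Now fix an internal node $u$ with children $u_1,u_2$, and write $n_i=|L(u_i)|$ and $k_i=k^*_{u_i}$ for $i\in\{1,2\}$; recall that $L(u)=L(u_1)\cup L(u_2)$ is a disjoint union and $e_G(L(u_1),L(u_2))=\mathbb{I}_u\cdot n_1 n_2$. The first step is a structural fact about $\mathcal{C}|_{L(u)}$: since $\mathcal{C}$ is well-behaved, $L(u_1)$ has an SG in $\mathcal{C}$, and if two distinct clusters of $\mathcal{C}$ each met both $L(u_1)$ and $L(u_2)$, then each would grow in $L(u_1)$ (it meets $L(u_1)$ and contains a vertex of $L(u_2)\subseteq V(G)\setminus L(u_1)$), contradicting that SG. Hence at most one cluster $C^*\in\mathcal{C}$ meets both $L(u_1)$ and $L(u_2)$. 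Using $(\mathcal{C}|_{L(u)})|_{L(u_i)}=\mathcal{C}|_{L(u_i)}$, it follows that $\mathcal{C}|_{L(u)}$ equals $\mathcal{C}|_{L(u_1)}\cup\mathcal{C}|_{L(u_2)}$ when no such $C^*$ exists, and otherwise is obtained from it by merging $C^*\cap L(u_1)$ with $C^*\cap L(u_2)$; by Lemma~\ref{SG-cluster-is-the-largest} these two sets are the largest clusters of $\mathcal{C}|_{L(u_1)}$ and $\mathcal{C}|_{L(u_2)}$, of sizes $k_1$ and $k_2$. Thus $k^*_u=\max(k_1,k_2)$ in the first case and $k^*_u=k_1+k_2$ in the second, and in both $cost_{G[L(u)]}(\mathcal{C}|_{L(u)})=cost_{G[L(u_1)]}(\mathcal{C}|_{L(u_1)})+cost_{G[L(u_2)]}(\mathcal{C}|_{L(u_2)})+(\text{cross-term})$, where the cross-term counts edits of $\mathcal{C}|_{L(u)}$ between $L(u_1)$ and $L(u_2)$; a short count shows it equals $\mathbb{I}_u n_1 n_2$ when $C^*$ does not exist and $\alpha(u)$ evaluated at $k=k_1+k_2,\ j=k_1$ when it does.

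For Item 1, I would inspect which of the three lines of the recurrence realizes a finite $D[u,k]$. For the first line, induction yields a clustering $\mathcal{D}_1$ of $G[L(u_1)]$ of cost at most $D[u_1,k]$ with largest cluster of size $k$, and a clustering $\mathcal{D}_2$ of $G[L(u_2)]$ of cost at most $D[u_2,j]$ with largest cluster of size $j\le k$; output $\mathcal{D}_1\cup\mathcal{D}_2$, whose only additional edits over $\mathcal{D}_1,\mathcal{D}_2$ are the $\mathbb{I}_u n_1 n_2$ deletions across $L(u_1),L(u_2)$, giving total cost at most $D[u,k]$ and largest cluster of size exactly $k$ (the second line is symmetric). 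For the third line, take $\mathcal{D}_1,\mathcal{D}_2$ as above with largest clusters $C_1^*$ of size $j$ and $C_2^*$ of size $k-j$, and output $(\mathcal{D}_1\setminus\{C_1^*\})\cup(\mathcal{D}_2\setminus\{C_2^*\})\cup\{C_1^*\cup C_2^*\}$; accounting for the non-edges inside $C_1^*\cup C_2^*$ and for which edges across $L(u_1),L(u_2)$ are kept versus deleted shows the extra cost over $\mathcal{D}_1,\mathcal{D}_2$ is exactly $\alpha(u)$, the merged cluster has size $k$, and no other cluster exceeds $\max(j,k-j)\le k$.

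For Item 2, the inequality $D[u,k^*_u]\le cost_{G[L(u)]}(\mathcal{C}|_{L(u)})$ follows from the structural fact together with the induction hypothesis $D[u_i,k_i]=cost_{G[L(u_i)]}(\mathcal{C}|_{L(u_i)})$: if no $C^*$ exists, assume without loss of generality $k_1\ge k_2$, so $k^*_u=k_1$ and the first line of the recurrence at $k=k_1$ with $j=k_2\in[k_1]$ gives at most $D[u_1,k_1]+D[u_2,k_2]+\mathbb{I}_u n_1 n_2=cost_{G[L(u)]}(\mathcal{C}|_{L(u)})$; if $C^*$ exists then $k^*_u=k_1+k_2$ and the third line at $k=k_1+k_2$ with $j=k_1$ (valid since $k_2\ge 1$) gives at most $D[u_1,k_1]+D[u_2,k_2]+\alpha(u)=cost_{G[L(u)]}(\mathcal{C}|_{L(u)})$. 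In particular $D[u,k^*_u]$ is finite, so Item 1 applied at $(u,k^*_u)$ yields a clustering of $G[L(u)]$ with largest cluster of size $k^*_u$ and cost at most $D[u,k^*_u]$; by Corollary~\ref{cor:largest-optimal}, $\mathcal{C}|_{L(u)}$ has minimum cost among all clusterings of $G[L(u)]$ whose largest cluster has size $k^*_u$, hence $cost_{G[L(u)]}(\mathcal{C}|_{L(u)})\le D[u,k^*_u]$. The two inequalities give the equality in Item 2, completing the induction.

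The step I expect to be the main obstacle is the cross-term bookkeeping in the merge cases of Items 1 and 2: one must count each edge and non-edge between $L(u_1)$ and $L(u_2)$ exactly once and verify, separately for $0$-nodes and $1$-nodes, that the surviving contribution is precisely $\mathbb{I}_u n_1 n_2$ or $\alpha(u)$ as appropriate. Everything else is driven by the structural observation that the SG property of $\mathcal{C}$ forces at most one cluster to span the two children's clades, which is what makes $\mathcal{C}|_{L(u)}$ decompose cleanly into $\mathcal{C}|_{L(u_1)}$ and $\mathcal{C}|_{L(u_2)}$ (with at most one merge) and lets the three lines of the recurrence cover all cases.
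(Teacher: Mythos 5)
Your proof is correct and follows essentially the same inductive approach as the paper: induction on the cotree, Item~1 verified by unwinding the three branches of the recurrence, and Item~2 by combining Corollary~\ref{cor:largest-optimal} (for the lower bound on $D[u,k^*_u]$) with the structural observation that the SG property forces at most one cluster of $\mathcal{C}$ to span both children's clades, after which Lemma~\ref{SG-cluster-is-the-largest} pins down the merged pieces as the largest clusters of the child restrictions. The only stylistic difference is that you isolate the decomposition of $\mathcal{C}|_{L(u)}$ and the cross-term accounting up front, whereas the paper interleaves this into the upper-bound argument for Item~2; the underlying logic is identical.
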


\begin{proof}
We prove the statement by induction on the cotree $T$.  For a leaf $u$, it is clear that both statements hold with $D[u, 1] = 0$.
Let $u$ be an internal node of $T$ and let $u_1, u_2$ be its children.  Denote $X = L(u), X_1 = L(u_1), X_2 = L(u_2)$.

We focus on the first part and assume that $D[u, k] \neq \infty$.  The value of $D[u, k]$ is the minimum among three cases. If $D[u, k] = D[u_1, k] + \min_{j \in [k]} D[u_2, j] + \mathbb{I}_u |L(u_1)||L(u_2)|$, then because $D[u, k] \neq \infty$, $D[u_1, k] \neq \infty$ and $D[u_2, j] \neq \infty$ for the chosen $j$ in the minimum expression.
We can take the disjoint union of a clustering of $G[L(u_1)]$ whose largest cluster has size $k$ (one exists of cost at most $D[u_1, k]$ by induction), and a clustering of $G[L(u_2)]$ with largest cluster of size $j \leq k$ (one exists of cost at most $D[u_2, j]$ by induction).  
If $u$ is a $1$-node, all edges between $L(u_1)$ and $L(u_2)$ are present and $\mathbb{I}_u |X_1||X_2|$ must be added to the cost (whereas if $u$ is a $0$-node, no additional cost is required).  This confirms that, if the first case of the recurrence is the minimum, there exists a clustering with cost at most $D[u, k]$ whose largest cluster has size $k$.  
The same argument applies to the second case of the recurrence.  

So assume that $D[u, k] = \min_{j \in [k-1]} (D[u_1, j] + D[u_2, k - j] + \alpha(u))$.
This case corresponds to taking, by induction, a clustering of $G[X_1]$ and of $G[X_2]$ with the largest cluster of size $j$ and $k - j$, respectively, and merging their largest cluster into one cluster of size $k$ (leaving the other clusters intact).  If $u$ is a $1$-node, the number of deleted edges  between the resulting clusters is $|L(u_1)||L(u_2)| - j(k - j)$, and if $u$ is a $0$-node we must pay $j(k-j)$ edge insertions.  This shows the first part of the statement. 

For the second part, suppose that $k = k^*_u$.  Here, $\mathcal{C}$ is the optimal clustering stated in the lemma. 
As we just argued, there is a clustering of $G[X]$ of cost at most $D[u, k^*_u]$ with the largest cluster size $k^*_u$.  By Corollary~\ref{cor:largest-optimal}, $\C|_X$ has minimum cost among such clusterings, and so $D[u, k^*_u]$ is at least $cost_{G[X]}(\C|_X)$.  We argue that the latter is also an upper bound on $D[u, k^*_u]$.
Let $C_1 \in \mathcal{C}|_{X_1}$, and note that if $C_1 \notin \mathcal{C}|_X$, then $C_1$ was ``merged'' with some cluster of $\mathcal{C}|_{X_2}$ to obtain $\mathcal{C}|_X$.
In this case, $u_1$ grows in some cluster of $\mathcal{C}|_X$, and therefore also grows in some cluster of $\mathcal{C}$.  
By the SG property, this means that there is at most one $C_1$ of $\mathcal{C}|_{X_1}$ such that $C_1 \notin \mathcal{C}|_X$.  
For the same reason, there is at most one $C_2$ of $\mathcal{C}|_{X_2}$ that is not in $\mathcal{C}|_X$.
This in turn implies that either $\mathcal{C}|_X = \mathcal{C}|_{X_1} \cup~ 
 \mathcal{C}|_{X_2}$, or that there is a unique $C_1 \in \mathcal{C}|_{X_1}$ and a unique $C_2 \in \mathcal{C}|_{X_2}$ such that $C_1 \cup C_2$ is in $\mathcal{C}|_X$.
 We now consider both cases.

 \begin{itemize}
     \item 
     If $\C|_X = \C|_{X_1} \cup \C|_{X_2}$, since $\mathcal{C}|_X$ has its largest cluster of size $k = k_u^*$, one of $\mathcal{C}|_{X_1}$ or $\mathcal{C}|_{X_2}$ must have a largest cluster of size $k$, and the other a largest cluster of some size $j \in [k]$.  
     Using induction on the second part of the lemma, the corresponding entries $D[u_i, k]$ and $D[u_{i'}, j]$ for $\{i, i'\} = \{1, 2\}$ store the costs of $\C|_{X_1}$ and $\C|_{X_2}$, and since all the possibilities are considered by the $D[u, k^*_u]$ recurrence, it is clear that $D[u, k^*_u]$ is at most $cost_{G[X]}(\C|_X)$. 

     \item 
     If $C_1 \cup C_2 \in \C|_X$, we have $\mathcal{C}|_X = ((\mathcal{C}|_{X_1} \cup \mathcal{C}|_{X_2}) \setminus \{C_1, C_2\}) \cup \{C_1 \cup C_2\}$.
     Observe that $X_1$ grows in $C_1 \cup C_2$.  This means that $X_1$ also grows in the cluster of $C' \in \mathcal{C}$ that contains $C_1 \cup C_2$.  By Lemma \ref{SG-cluster-is-the-largest}, $|X_1 \cap C'| \geq |X_1 \cap C''|$ for each $C'' \in \mathcal{C}$. 
Since $C'$ contains $C_1$, we get that $|X_1 \cap C'| = |C_1|$, and since $\{ X \cap C''  : C'' \in \mathcal{C}\}$ contains all the clusters of $\mathcal{C}|_{X_1}$, it must be that $C_1$ is the largest cluster of $\mathcal{C}|_{X_1}$.  By the same argument, $C_2$ is the largest cluster of $\mathcal{C}|_{X_2}$.
Let $j = |C_1|$.  We have that $\mathcal{C}|_X$ has the largest cluster size $k$, and is obtained by taking a clustering of $G[L(u_1)]$ with the largest cluster of size $j$, 
and a clustering of $G[L(u_2)]$ with largest cluster of size $k - j$, and merging these two clusters.  If $u$ is a $1$-node, the cost of this is
the cost of $\mathcal{C}$ in $G[L(u_1)]$, which is $D[u_1, j]$ by induction, plus the cost of $\mathcal{C}$ in $G[L(u_2)]$, which is $D[u_2, k - j]$ by induction, 
plus the cost for all the edges between $L(u_1)$ and $L(u_2)$, excluding those between $C_1$ and $C_2$.  If $u$ is a $0$-node, the cost is the same, except that we pay $j (k - j)$ for the non-edges between $C_1$ and $C_2$.  Either way, this case is considered by our recurrence, and we get $D[u, k] \leq cost_{G[X]}(\mathcal{C}|_X)$.
 \end{itemize}
 Having shown both the lower and upper bounds, we get that $D[u, k_u^*] = cost_{G[X]}(\mathcal{C}|_X)$.
\end{proof}

\begin{theorem}
\label{thmtpg}
The \textsc{Cluster Editing} problem can be solved in time $O(n^3)$ on trivially perfect graphs.
\end{theorem}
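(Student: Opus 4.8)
The plan is to solve \textsc{Cluster Editing} on a TPG $G$ by the dynamic program already set up above, run over a binary cotree of $G$, and to return $\min_{k\in[n]} D[r,k]$, where $r$ is the root. Concretely, I would first compute a cotree of $G$ in linear time~\cite{DBLP:journals/siamcomp/CorneilPS85} and refine it into a binary cotree $T$ in the standard way (replacing an internal node of degree $>2$ by a chain of internal nodes of the same type), which is harmless since the structural lemmas above assume nothing about the shape of the cotree. I would then precompute $|L(u)|$ for all nodes $u$ by a single postorder pass (so that $|L(u_1)||L(u_2)|$ is available in $O(1)$ at each internal node), initialize the leaves by $D[u,1]=0$ and $D[u,k]=\infty$ for $k\neq 1$, and fill the table bottom-up using the three-case recurrence. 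The answer returned is $\min_{k\in[n]} D[r,k]$; if an explicit editing set is wanted, I would additionally record, for each entry, which branch of the recurrence and which split index $j$ was optimal, and backtrack from $\bigl(r,\arg\min_k D[r,k]\bigr)$ down $T$ in $O(n)$ extra time.

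For correctness I would invoke the lemmas already proven. By Lemma~\ref{SG-cluster-is-the-largest} together with the lemmas preceding it, $G$ admits a \emph{well-behaved} clustering $\mathcal{C}$ with respect to $T$; let $k^*_r$ be the size of its largest cluster (note $\mathcal{C}|_{V(G)}=\mathcal{C}$). Applying the second item of Lemma~\ref{cotree-and-Dtable} at the root gives $D[r,k^*_r] = cost_G(\mathcal{C})$, which is the optimal cost, so $\min_k D[r,k] \le \mathrm{opt}$; and the first item of Lemma~\ref{cotree-and-Dtable} guarantees that every finite $D[r,k]$ is the cost of an actual clustering of $G$, hence $D[r,k]\ge\mathrm{opt}$, so the minimum is exactly $\mathrm{opt}$. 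I would note here that letting $k$ range over all of $[n]$ is exactly right, since (unlike in $p$-\textsc{Cluster Editing}) the number of clusters is unconstrained.

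For the running time I would argue as follows. There are $O(n)$ nodes, and at each internal node $u$ I fill $n$ entries. For the first two branches of the recurrence I precompute the prefix minima $\min_{j\in[k]} D[u_i,j]$ for $i\in\{1,2\}$ and all $k$ in $O(n)$ time, after which each of those branches costs $O(1)$ per entry. The third branch, $\min_{j\in[k-1]}\bigl(D[u_1,j]+D[u_2,k-j]+\alpha(u)\bigr)$, is a $(\min,+)$-style convolution whose additive term $\alpha(u)$ equals $|L(u_1)||L(u_2)|-j(k-j)$ or $j(k-j)$ and hence is \emph{not} separable into a function of $j$ plus a function of $k$; consequently it takes $\Theta(k)$ time per entry, i.e.\ $O(n^2)$ per internal node and $O(n^3)$ over all of $T$, which dominates. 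The total is thus $O(n^3)$.

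I do not expect a substantive obstacle in this final theorem: the conceptual work is entirely in the preceding lemmas (existence of a well-behaved clustering, and the soundness of Lemma~\ref{cotree-and-Dtable} via Corollary~\ref{cor:largest-optimal}), and the only point that genuinely requires attention is the running-time accounting --- specifically, recognizing that the non-separable merge cost $\alpha(u)$ in the third branch forces $\Theta(n^2)$ work per node and thereby pins the overall bound at $O(n^3)$ rather than anything smaller. Everything else is bookkeeping.
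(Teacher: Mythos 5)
Your proposal is correct and takes essentially the same approach as the paper: correctness is derived exactly as the paper does by applying Lemma~\ref{cotree-and-Dtable} at the root (its second item shows $\min_k D[r,k] \le cost_G(\mathcal{C})$ for a well-behaved $\mathcal{C}$, its first item gives the reverse inequality), and the $O(n^3)$ bound comes from $O(n)$ work per entry over $O(n^2)$ entries. Your extra implementation notes (binary refinement of the cotree, prefix minima for the first two branches, the observation that $\alpha(u)$ is not separable so the third branch forces $\Theta(k)$ work per entry) are accurate but do not change the route, since the paper's coarser ``$O(n)$ per entry'' accounting already yields the same bound.
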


\begin{proof}
Note that the size of a cluster is in $[1, n]$, and so in the $D[u, k]$ table, we only consider entries with $k \in [n]$.
We can compute each $D[u, k]$ in time $O(n)$, and there are $O(n^2)$ such entries.  
Once all entries are computed, letting $r$ be the root of $T$, we return $\min_{k \in [n]} D[r, k]$.  This takes total time $O(n^3)$.

To see that this algorithm is correct, let $\mathcal{C}$ be a well-behaved clustering of $G$. Let $k^*$ be the size of the largest cluster of $\mathcal{C}$.
By Lemma~\ref{cotree-and-Dtable}, $D[r, k^*] = cost_G(\mathcal{C})$.  Hence, our returned value is at most $cost_G(\mathcal{C})$.  By the same lemma, for any $k \in [n]$ such that $D[r, k] \neq \infty$, there exists a clustering $\mathcal{C}_k$ of $G$ of cost at most $D[r, k]$ whose largest cluster size is $k$.  
By the optimality of $\mathcal{C}$, we have $D[r, k] \geq cost_G(\mathcal{C}_k) \geq cost_G(\mathcal{C}_{k^*}) = cost_G(\mathcal{C})$ for every $k$.  Hence our returned value is at least $cost_G(\mathcal{C})$.  Therefore, we return $cost_G(\mathcal{C})$.
\end{proof}

\vspace{2mm}

\noindent
\textbf{Open problems.}
We observe that the structural properties shown on TPGs only work if the number of desired clusters is unrestricted.  The complexity of  $p$-\textsc{Cluster Editing} on TPGs is open (note that if $p$ is constant, our $n^{O(p)}$ time algorithm on cographs provides a polynomial-time algorithm). Regarding our clique-width (or rather NLC-width), it might be possible to improve the complexity, for example by achieving $n^{O(p + cw)}$ instead of $n^{O(p \cdot cw)}$.

We proved the problem in P on $\{P_4, C_4\}$-free graphs, but we do not know the complexity on $\{P_4, 2K_2\}$-graphs.  
The complement of such graphs are $\{P_4, C_4\}$-free and may be in P as well, but it is unclear whether the editing problem on the complement can be solved using our techniques.  
More generally, it would be ideal to aim for a dichotomy theorem for forbidden induced subgraphs, that is, to characterize the forbidden induced subgraphs that make \textsc{Cluster Editing} in P, and the ones that make it NP-hard.



\bibliographystyle{abbrv}
\bibliography{references}

\newpage 

\appendix

\end{document}